\definecolor{darkgreen}{rgb}{0.0,0.7,0.0}
\newenvironment{aj}{\noindent\color{magenta} Artur:} {}
\newenvironment{mg}{\noindent\color{blue} Micha?:} {}
\newcommand{\procfont}[1]{\textsc{\textrm{#1}}}
\newcommand{\hk}[2][k]{{|#2|H_{#1}(#2)}}
\newcommand{\seq}[3]{\ensuremath{{#1}_{#2},\ldots,{#1}_{#3}}}
\newcommand{\en}[2][0]{\ensuremath{|#2|H_{#1}(#2)}}
\newcommand{\T}{{\mathcal T}}
\newcommand{\algofont}[1]{\textnormal{\selectfont\sffamily#1}}
\providecommand{\twodots}{\mathinner{\ldotp\ldotp}}
\providecommand{\Ocomp}{{\mathcal O}}
\newcommand{\prob}{\mathbb{P}}
\newcommand{\Cref}[1]{(\hyperref[C#1]{C#1})\xspace}
\newcommand{\Crefall}{(\hyperref[C1]{C1--C4})\xspace}
\newcommand{\Tref}[1]{(\hyperref[T#1]{T#1})\xspace}
\newcommand{\Trefall}{(\hyperref[T1]{T1--T4})\xspace}
\theoremstyle{plain}
\newtheorem{lemma}{Lemma}
\newtheorem{theorem}[lemma]{Theorem}
\theoremstyle{definition}
\newtheorem{definition}[lemma]{Definition}
\theoremstyle{remark}
\newtheorem{example}[lemma]{Example}
\newcommand{\ts}{|\mathcal{T}|}
\newcommand{\tps}{|\mathcal{T}'|}
\titlespacing{\paragraph}{0pc}{\parskip}{1pc}
\begin{document}
	

\title{Using statistical encoding to achieve tree succinctness never seen before}

\author{Micha{\l} Ga{\'n}czorz}
\affil{University of Wroc{\l}aw, Wroc{\l}aw, Poland, mga@cs.uni.wroc.pl}


\maketitle
\begin{abstract}
We propose a new succinct representation of labeled trees
which represents a tree $\mathcal T$ using $\hk{\mathcal T}$ number of bits
(plus some smaller order terms),
where $\hk{\mathcal T}$ denotes the $k$-th order (tree label) entropy,
as defined by Ferragina at al.\ 2005.
Our representation employs a new, simple method of partitioning the tree,
which preserves both tree shape and node degrees.
Previously, the only representation that used $\hk{\mathcal T}$
bits was based on \algofont{XBWT}, a~transformation that linearizes a labeled tree into a string,
combined with compression boosting.
The proposed representation is much simpler than the one based on \algofont{XBWT},
which used additional linear space (bounded by $0.01n$) hidden in the ``smaller order terms'' notion, as an artifact of using zeroth order entropy coder; 
our representation uses sublinear additional space (for reasonable values of $k$ and size of the label alphabet $\sigma$).
The proposed representation can be naturally extended to a~succinct data structure for trees,
which uses $\hk{\mathcal T}$ plus 
additional $\Ocomp(|\mathcal{T}| k\log \sigma / \log_\sigma |\mathcal{T}| 
+ |\mathcal{T}| \log \log_\sigma |\mathcal{T}| / \log_\sigma |\mathcal{T}|)$ bits
and supports all the usual navigational queries in constant time.
At the cost of increasing the query time to
$\Ocomp(\log \ts / \log \log \ts)$
we can further reduce the space redundancy to
$\Ocomp(|\mathcal{T}| \log \log |\mathcal{T}| / \log_\sigma |\mathcal{T}|)$ bits,
assuming $k \leq \log_\sigma |\mathcal{T}|$.
This is a major improvement over representation based on \algofont{XBWT}:
even though \algofont{XBWT}-based representation uses $\hk{\mathcal T}$ bits,
the space needed for structure supporting navigational queries is much larger:
original solution consumed at least $\hk[0]{\mathcal T}$ bits, i.e.\ zeroth order entropy of labels.
This was later improved to achieve $k$-th order entropy of 
string obtained from \algofont{XBWT},
see Ferragina et al. 2006,
who argued that such an entropy is intuitively connected to $\hk{\mathcal T}$, 
though they are formally different;
still, this representation gave non-constant query time and
did not support more complex queries like level\_ancestor.
Lastly, our data structure is fairly simple,
both conceptually and implementationally, and uses known tools,
which is a counter-argument to the claim
that methods based on tree-partitioning are impractical.


We also introduce new, finer, tree entropy measures,
which lower bound previously known ones.
They try to capture both the shape of the tree, which previously was measured by tree entropy (Jansson et al. 2006),
and $\hk{\mathcal{T}}$ (Ferragina et al. 2005)
and the relation between them.
Our motivation for introducing this measure is that while $k$-th order entropy for strings captures well the compressibility
(and the compression algorithms based on them in practice outperform other compression methods in terms of compression ratio),
this is not true in case of trees:
simple dictionary-based methods for trees,
like tree grammars or Top-Dag, work good in practice and can sometimes outperform \algofont{XBWT},
which encodes the labels using the $k$-th order label entropy
(and separately encodes the tree shape).
Considering labels and tree shape together should give a better notion.
All bounds on the space usage of our structures hold also using the new measures,
which gives hope for a better practical performance.
\end{abstract}

\newpage
\section{Introduction}
Labeled trees arise in many applications in the real world,
with XML files being the most known one.
A trivial encoding of $n$ labels from a $\sigma$-size alphabet uses 
$n \log \sigma$ bits;
naturally, we would like to encode them more succinctly,
without loosing the possibility of efficiently querying those trees.
%
The first nontrivial analysis of encodings of labels was 
carried out by Ferragina at al.~\cite{ferragina2005labeledTrees},
who introduced the concept of $k$-th order entropy of trees,
denoted by $H_k$, and proposed a novel transformation, \algofont{XBWT},
which can be used to represent the tree using $\hk[k]{\T}$
plus smaller order terms bits.
The $H_k$ is a natural extension of $k$-th order entropy for words,
i.e.\ it is defined in similar manner,
where the context is a concatenation of $k$ labels
on the path to the root.
Moreover, it is claimed to be good measure both in
practice~\cite{ferragina2009compressing}
(as similarly labeled nodes
descend from similarly labeled contexts~\cite{ferragina2005labeledTrees})
and in theory, as it is related to theory of tree sources~\cite{SourceEncodingSyntactic}.
\algofont{XBWT} generates a single string from tree labels
by employing a \algofont{BWT}-like transformation on trees
(and it stores the tree shape separately).
To achieve $H_k$, the compression boosting technique,
known from text compression~\cite{FerraginaCompressionBoosting},
is also needed.
This has some disadvantages:
compression boosting adds a small linear term (bounded by $0.01n$) to the redundancy~\cite{ferragina2005labeledTrees},
which causes this solution to be a~little off from optimal
Moreover, this approach works for tree compression,
to perform operations on such representation additional structures
for \procfont{rank/select} over alphabet of size $\sigma$
are needed.
Later on Ferragina et al.~\cite{ferragina2009compressing}
proposed a solution which
encodes the labels of the tree using $k$-th order entropy of string
produced by \algofont{XBWT} and support some navigational queries,
this result may seem unrelated,
but they argued that intuitively
$k$-th order entropy of the tree and entropy of \algofont{XBWT}
string are similar, because they both similarly cluster node labels
(though no formal relation is known between the two).
Still, this structure supports very limited set of operations,
for instance, it does not support
\procfont{level\_ancestor} queries
has non-constant query time and large additional space consumption.
The latter is due to the fact that it needs high-order compressed 
structure for \procfont{rank/select} queries over large alphabets,
which was proven to be much harder than simple \procfont{rank/select}~\cite{OptimalLowerUpper, GrossiOptimalTradeoffIndexes}.
Note though, that this structure supports also
basic label-related operations like \procfont{childrank($v, \alpha$)},
which returns rank of $v$ among children labeled with $\alpha$.
We note that, until this paper, \algofont{XBWT} was the only known 
compression method achieving $H_k$, for $k > 0$, for trees.

All of the above mentioned methods still needs additional $2\ts$ bits
for storing tree shape.
Later on, it was observed that we can beat $2\ts$ bits
bound for storing tree shape, by assuming some
distribution on node degrees~\cite{ultraSuccintTrees}.
This was achieved by encoding tree in DFUDS representation.
The formal definition of tree entropy
provided by Jansson et al.~\cite{ultraSuccintTrees}
is simply the classical zeroth order
entropy on the node degrees.

Aside from entropy-based tree compression,
dictionary tree compression methods were also proposed;
those methods try to exploit similarity of different subtrees.
This includes top-trees~\cite{bille2013TopTrees}
and tree grammars~\cite{BuLoMa07}.
Interestingly, while context-based methods on strings, like
\algofont{PPM}/\algofont{BWT}, are in practice clearly superior
to dictionary-based ones
(and also in theory, as dictionary methods can give significantly worse bit-size with respect to $H_k$,
under the same assumptions~\cite{entropyBoundsUnpublished}),
we cannot claim the same for trees~\cite{ferragina2009compressing, LohreyMM11},
i.e\ methods achieving $H_k$ for trees are not significantly better in practice.
We believe one reason behind this phenomenon is that current entropy measures for trees
do not capture all of the information stored in a tree:
for instance \algofont{XBWT} based methods, which are optimal with respect to $H_k$,
ignore the information on tree shape,
whereas in real life data tree labels and tree shape are highly correlated.

\paragraph{Our contribution}

We start by defining two new measures of entropy of trees,
which take into the account \emph{both} tree structure \emph{and} tree labels:
$H_k(\mathcal{T}|L)$ and $H_k(L|\mathcal{T})$.
Those measures lower bound
previous measures, i.e
tree entropy~\cite{ultraSuccintTrees}
and $H_k$~\cite{ferragina2005labeledTrees}, respectively.

Then we propose a new way of partitioning the tree.
In contrast to previous approaches,
(i.e.\ succinct representations~\cite{geary2006succinct,
farzan2009universal}
and dictionary compression),
this partition preserves \emph{both} the shape of the tree
\emph{and} the degrees of the nodes.
We show that by applying entropy
coder to tree partition
we can bound the size of the tree encoding by both
$\ts H_k(\mathcal{T}|L) + \ts H_k(L) +
\Ocomp(\ts k\log\sigma / \log_\sigma \ts + \ts \log\log_\sigma \ts / \log_\sigma \ts)$
and $\ts H_k(L|\mathcal{T})  + \ts H(\mathcal{T}) + 
\Ocomp(\ts(k+1)\log\sigma / \log_\sigma \ts + \ts\log\log_\sigma \ts / \log_\sigma \ts)$
bits.
This is the first method which is not based on \algofont{XBWT}
and achieves bounds related to $H_k$.

We show that using standard techniques
we can augment our tree encoding, at the cost of increasing the constants
hidden in the $\Ocomp$ notation,
so that most of the navigational queries are supported in constant time,
thus getting the first structure which achieves $H_k$ for trees
and supports queries on compressed form.
Note that some of the previously mentioned methods, like tree grammars~\cite{ganardi2017constructing} or
high-order compressed \algofont{XBWT}~\cite{ferragina2005labeledTrees},
do not have this property.

Then we show that we can further reduce the redundancy to $\Ocomp(\ts \log \log \ts / \log_\sigma \ts)$
bits, at the cost of increasing query time to $\Ocomp(\log \ts / \log \log \ts)$.
Moreover, we show that we use known tools to support label-related operations,
we get the same additional cost as previously mentioned structure based
on \algofont{XBWT}~\cite{ferragina2009compressing},
but still supporting the rest of the operations in constant time,
thus we get the structure which outperforms previously known ones.
Our methods can also be applied to unlabeled trees to achieve tree entropy~\cite{ultraSuccintTrees},
in which case we get the same (best known) additional space as in~\cite{ultraSuccintTrees},
our techniques, in case of standard operations, are also less complex
than other methods achieving tree entropy~\cite{ultraSuccintTrees, farzan2014uniform}.
Lastly, our structure allows to retrieve the tree
in optimal, $\Ocomp(\ts / \log_\sigma \ts)$ time
(assuming machine words of size $\Theta(\log \ts)$).

\section{Definitions}\label{sec:definitions}
We denote the input alphabet by $\Sigma$,
and size of the input alphabet as $\sigma = |\Sigma|$.
For a tree $\mathcal{T}$ we denote by $|\mathcal{T}|$
the number of its nodes and the same applies to forests.
We consider rooted (i.e.\ there is a designated root vertex),
ordered (i.e.\ children of a given vertex have a left-to-right order imposed on them)
$\Sigma$-labeled (i.e.\ each node has a label from $\Sigma$) trees,
moreover label does not determine node degree and vice versa.
We assume that bit sequences of length $\log \ts $ fit into $\Ocomp(1)$ machine words
and that we can perform operations on them in constant time.

\subsection{String Entropy and k-th order Entropy}
For a string $S$ its (\emph{zeroth order}) \emph{entropy}, denoted by $H_0$,
is defined as $\en{S} =  -\sum_{s \in S} t_s \log \frac{t_s}{|S|}$,
where $t_s$ is a number of occurrences of character $s$ in $S$.
It is convenient to think that $- \log \frac{t_s}{|S|}$ assigned to a symbol $s$
is the optimal cost of encoding this symbol (in bits)
and $\frac{t_s}{|S|}$ is the \emph{empirical probability} of occurrence of $s$.
Note that those ``values'' are usually not natural numbers.

The standard extension of this measure is the $k$-th order entropy,
denoted by $H_k$,
in which the (empirical) probability of $s$ is conditioned by $k$ preceding letters,
i.e.\ the cost of single occurrence of letter $s$ is equal to $\log \prob(s|w)$,
where $\prob(s|w) = \frac{t_{ws}}{t_w}$, $|w|=k$ and 
$t_{v}$ is the number of occurrences of a word $v$ in given word $S$.
We call $\prob(s|w)$ the \emph{empirical probability} of a letter $s$ occurring 
in a $k$-letter context.
Then
$\en[k]{S} = -\sum_{s \in \Sigma, w \in \Sigma^k } t_{ws} \log \prob(s|w)$.
The cost of encoding the first $k$ letters is ignored
when calculating the $k$-th order entropy.
This is acceptable, as $k$ is (very) small compared to $|S|$,
for example most tools based on popular context-based compressor family PPM
use $k\leq 16$.

\subsection{Entropy and $k$-th order entropy for trees.}

In the case of labeled trees, $0$-th order entropy
has a natural definition --- 
its a zeroth order entropy, $H_0(S)$, of string $S$ made by concatenation of labels of vertices.
However, for $k$-th order entropy the situation gets more involved,
as now we have to somehow define the context.

\paragraph{Label-entropy}
Ferragina et al.~\cite{ferragina2005labeledTrees} proposed
a definition of $k$-th order entropy of labeled trees.
The context is defined as the $k$ labels from the node to the root.
Similarly, as in the case of first $k$ letters in strings,
this is undefined for nodes whose path to the root is of length less than $k$,
which can be large, even when $k$ is small.
%
There are two ways of dealing with this problem:
in the first we allow the node to have the whole
path to the root as its context (when this path shorter than $k$);
in the second we pad the too short context with some fixed letters.
Our algorithms can be applied to both approaches,
resulting in the same (asymptotic) redundancy;
for the sake of the argument we choose the first one,
as the latter can be easily reduced to the former.

Observe that such defined entropy takes into the account
only the labels, and so we call it the 
\emph{$k$-th order entropy of labels},
which is formally defined as
$|L|H_k(L) = -\sum_{v \in \mathcal{T} } \log \prob(l_v | K_v)$,
where $l_v$ is label of vertex $v$,  $K_v$ is the word made by 
last $k$ labels of nodes on the path from root of $\mathcal{T}$ to $v$
(or less if the path from the root to $v$ is shorter than $k$)
and as in the case for string $\prob(l_v | K_v)$ is the empirical
probability of label $l_v$ conditioned that it occurs in context $K_v$.

\paragraph{Tree entropy}
The ($k$-th order) entropy of labels ignores the shape of the tree
and the information carried by it,
and we still need to represent the tree structure somehow.
A~counting argument shows that
representing an unlabeled tree of $n$ vertices
requires $2n - \Theta(\log n)$ bits~\cite{JacobsonSuccint}
and there are practical ways of storing unlabeled trees
using $2n$ bits (balanced parenthesis~\cite{jacobson1989space},
DFUDS~\cite{benoit2005representing},
LOUDS~\cite{jacobson1989space}),
which are often employed in succinct data structures for trees~\cite{succintDictionariesWithTrees,
fullyFunctionalTrees, farzan2009universal, munro1997succinct, raman2003succinct,
benoit2005representing, jacobson1989space}.

Yet, as in the case of strings, real data is rarely a random tree drawn
from the set of all trees:
For example XML files are shallow and some tree
shapes repeat, like in Figure~\ref{fig:catalog}.
For this reason the notion of \emph{tree entropy} was introduced~\cite{ultraSuccintTrees},
with the idea that it takes into the account
the probability of a node having a particular degree,
i.e.\ it measures the number of trees under some degree distribution.
Formally, it is defined as: $|\mathcal {T}| H(\mathcal{T}) = -\sum_{i=0}^{|\mathcal T|} d_i \log \frac{d_i}{|\T|}$,
where $d_i$ is the number of vertices of degree $i$ in $\mathcal T$.
Up to $\Theta (\log n)$ additive summand,
tree entropy is an information-theoretic lower bound
on the number of bits needed to represent unlabeled tree that has
some fixed degree distribution~\cite{ultraSuccintTrees}.
Moreover, as in the case of string entropy,
tree entropy lower bound the simpler estimations,
in the sense that $|\T| H(\T) \leq 2 \ts $~\cite{ultraSuccintTrees}.

\paragraph{Mixed entropy}
Label entropy and tree entropy treat labels and tree structure separately,
and so did most of the previous approaches to labeled tree data
structures~\cite{ferragina2005labeledTrees, ultraSuccintTrees,
ferragina2009compressing, he2014framework}.
Yet, the two are most likely correlated:
one can think of XML document representing the collection of
different entities such as books, magazines etc.,
see Figure~\ref{fig:catalog}.
Knowing that the label of some node is equal to ``book'',
and that each book has an author, a year and a title,
determines degree of the vertex to be three
(and in tree grammar compression model we explicitly assume that the label uniquely defines the arity of node~\cite{BuLoMa07}).
On the other hand, knowing the degree of the vertex
can be beneficial for information on labels.

Motivated by these considerations, we define two types of \emph{mixed entropies}:
\begin{itemize}\setlength\itemsep{-0.1em}
	\item $\ts H_k(L|\mathcal{T}) = -\sum_{v \in \mathcal{T}}
	\log \prob (l_{v} | K_{v}, d_v)$,
	where $\prob (l_{v} | K_{v}, d_v)$ is the empirical probability of
	node $v$ having label $l_v$ conditioned that it occurs in the context $K_v$ and
	the node degree is $d_v$,
	that is $\prob (l_{v} | K_{v}, d_v) =
	\frac{t_{K, l_v, d_v}}{ t_{K, d_v}}$,
	where $t_{K, l_v, d_v}$ is a number of nodes in $\mathcal{T}$
	with context $K$,
	having degree $d_v$ and a label $l_v$,
	and $t_{K, d_v}$ is number of nodes in $\mathcal{T}$
	preceded by the context $K$,
	and having degree $d_v$;
	\item $\ts H_k(\mathcal{T}|L) = -\sum_{v \in \mathcal{T}}
	\log \prob (d_v | K_{v}, l_v)$,
	where $\prob (d_v | K_{v}, l_v)$,
	is empirical probability of node $v$ having
	degree $d_v$ conditioned that $v$ has a context $K_v$ and a label~$l_v$.
	The formal definition is similar to the one above.
\end{itemize}

We show that we can represent a tree using
either $\ts  H_k(\mathcal{T}|L) + \ts H_k(L)$ or $\ts H_k(L|\mathcal{T}) + \ts H(\mathcal{T})$ bits
(plus some small order terms).

On the other hand, the new measures
lower bound old ones, i.e.\ tree entropy and label entropy:
\begin{lemma}
\label{lem:entropies_not_larger}
The following inequalities hold:
$$
\ts H_k(\mathcal{T}|L) \leq \ts H(\mathcal{T})
\quad \text{and} \quad
\ts H_k(L|\mathcal{T}) \leq \ts H_k(L) \enspace .
$$
\end{lemma}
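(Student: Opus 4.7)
The plan is to derive both inequalities from the standard fact that ``conditioning reduces entropy'', instantiated for empirical distributions via the log-sum inequality. First I would introduce compact notation for the relevant counts: write $t_K$, $t_{K,l}$, $t_{K,d}$, $t_{K,l,d}$ for the number of vertices of $\mathcal{T}$ with the indicated combination of context, label, and degree. Unfolding the definitions, each of the four entropy expressions rewrites as a double sum over these counts; for instance,
\[
\ts H_k(\mathcal{T}|L) = \sum_{K, l, d} t_{K,l,d}\,\log \frac{t_{K,l}}{t_{K,l,d}},
\qquad
\ts H(\mathcal{T}) = \sum_d t_d\, \log \frac{\ts}{t_d},
\]
where $t_d = \sum_{K, l} t_{K,l,d}$ is the total number of nodes of degree $d$, and analogous formulas hold for $\ts H_k(L|\mathcal{T})$ and $\ts H_k(L)$.

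For the first inequality $\ts H_k(\mathcal{T}|L) \le \ts H(\mathcal{T})$, I would fix $d$ and apply the log-sum inequality to the sequences $a_{K,l} = t_{K,l,d}$ and $b_{K,l} = t_{K,l}$, using $\sum_{K,l} t_{K,l,d} = t_d$ and $\sum_{K,l} t_{K,l} = \ts$, to obtain
\[
\sum_{K,l} t_{K,l,d}\,\log\frac{t_{K,l,d}}{t_{K,l}} \;\geq\; t_d\, \log \frac{t_d}{\ts},
\]
and then sum over $d$ and flip signs.

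For the second inequality $\ts H_k(L|\mathcal{T}) \le \ts H_k(L)$, I would analogously fix a pair $(K, l)$ and apply the log-sum inequality with $a_d = t_{K,l,d}$ and $b_d = t_{K,d}$, using $\sum_d t_{K,l,d} = t_{K,l}$ and $\sum_d t_{K,d} = t_K$, to conclude
\[
\sum_d t_{K,l,d}\,\log \frac{t_{K,l,d}}{t_{K,d}} \;\geq\; t_{K,l}\, \log \frac{t_{K,l}}{t_K};
\]
summing over all $(K, l)$ and flipping signs gives the claim.

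There is no essential obstacle: the whole argument is the combinatorial incarnation of ``$H(X\mid Y, Z) \leq H(X \mid Y)$'', realised by a single application of the log-sum inequality in each direction. The only mild care needed concerns the definition of the context $K_v$ near the root (where it may be shorter than $k$), but since the same context $K_v$ appears on both sides of each inequality, it plays a purely bookkeeping role and the manipulations above are unaffected.
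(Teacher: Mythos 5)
Your proposal is correct and takes essentially the same route as the paper: rewrite each mixed entropy as a sum over joint counts $t_{K,l,d}$ and apply the log-sum inequality once in each direction (grouping by $d$ for the first inequality, by $(K,l)$ for the second). In fact your count bookkeeping is cleaner than the paper's, whose displayed proof appears to have a typo in the inner sum (writing $t_{K,d}$ in the denominator where, for $H_k(\mathcal{T}\mid L)$, it should be $t_{K,l}$, exactly as you have it).
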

Lemma~\ref{lem:entropies_not_larger} follows from a standard use of \emph{log sum inequality}:
intuitively increasing number of possible contexts can only reduce entropy,
see Section~\ref{sec:definitions_appendix}.

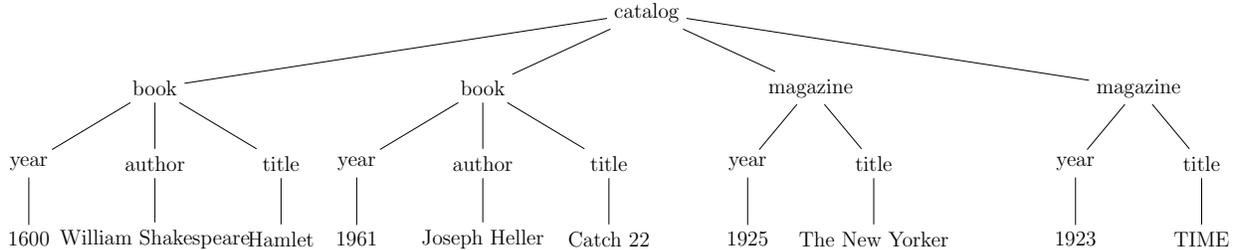
\begin{figure*}
\begin{center}
\resizebox{1.0\textwidth}{!}{%
\begin{tikzpicture}[
level1/.style ={sibling distance=6.5cm},
level2/.style ={sibling distance=2.5cm},
level3/.style ={sibling distance=1.5cm},
]
\large
\node {catalog}
	child[level1] { node {book}
		child[level2] { node {year} child[level3] { node {1600} } }
		child[level2] { node {author} child[level3] { node {William Shakespeare} } }
		child[level2] { node {title} child[level3] { node {Hamlet} } }
	}
	child[level1] { node {book}
		child[level2] { node {year} child[level3] { node {1961} } }
		child[level2] { node {author} child[level3] { node {Joseph Heller} } }
		child[level2] { node {title} child[level3] { node {Catch 22} } }
	}
	child[level1] { node {magazine}
		child[level2] { node {year} child[level3] { node {1925} } }
		child[level2] { node {title} child[level3] { node {The New Yorker} } }
	}
	child[level1] { node {magazine}
		child[level2] { node {year} child[level3] { node {1923} } }
		child[level2] { node {title} child[level3] { node {TIME} } }
	}
	;
\end{tikzpicture}
}
\end{center}
\vspace{-0.5cm}
\caption{Sample XML file structure}
\label{fig:catalog}
\end{figure*}

\section{Tree clustering}\label{sec:tree_clustsering}
We present new clustering method which preserves both node labels and vertex degrees.
\paragraph{Clustering}
The idea of our clustering technique is that we group nodes
into clusters of $\Theta(\log_\sigma \ts)$ nodes,
and collapse each cluster into a single node,
thus obtaining a tree $\mathcal{T'}$ of $\Ocomp(\ts / \log_\sigma \ts)$ nodes.
We label its nodes so that the new  label uniquely determines
the cluster that it represents
and separately store the description of the clusters.

The idea of grouping nodes was used before in the context of compressed tree indices~\cite{geary2006succinct, he2007succinct},
also some dictionary compression methods like tree-grammars or top-trees and other carry some similarities~\cite{bille2013TopTrees, ganardi2017constructing,
hubschle2015tree,treeLZ}.
Yet, from our perspective, their main disadvantage is that the degrees of the nodes in the internal representation
were very loosely connected to the degrees of the nodes in the input tree, thus the tree entropy and mixed entropies are hardly usable in upper bounds on space usage.
We propose a new clustering method, which preserves the degrees of the nodes and tree structure much better:
most vertices inside clusters have the same degree as in $\mathcal{T}$ and those that do not have their degree zeroed.
Thus, when the entropy coder is used to represent the clusters
we can bound the used space both in terms
of the label/tree entropy and mixed entropy of $\mathcal{T}$.

The clustering uses a parameter $m$,
which is the maximum size of the cluster
(up to the factor of $2$).
Each node of the tree is in exactly one cluster
and there are two types of nodes in a cluster:
\emph{port} and \emph{regular} nodes.
A port is a leaf in a cluster
and for a regular node all its children in $\T$
are also in the same cluster (in the same order as in $\T$);
in particular, its degree in the cluster is the same as in $\T$.
Observe that this implies that each node
with degree larger than $2m$
will be a port node.

The desired properties of the clustering are:
\begin{enumerate}[(C1)]\setlength\itemsep{-0.2em}
	\item\label{C1} there at least $\frac{\ts}{2m} - 1$ and at most $\frac{2\ts}{m} + 1$ clusters;
	\item\label{C2} each cluster is of size at most $2m-1$;
	\item\label{C3} each cluster is a forest of subtrees (i.e.\ connected subgraphs) of $\T$,
	roots of trees in this forest are consecutive siblings in $\mathcal{T}$;
	\item\label{C4}		
	each node in a cluster $C$ is either
	a port node or a regular node;
	each port node is a leaf in $C$
	and each regular node has the same degree in $C$
	as in $\mathcal{T}$.
	In particular, if node $u$ belongs to some cluster $C$,
	then either all of its children are in $C$ or none.
\end{enumerate}
A clustering satisfying~\Crefall can be found using a natural bottom-up greedy algorithm.
\begin{lemma}
	\label{lem:clustering_procedure}
	Let $\mathcal{T}$ be a labeled tree.
	For any $m \leq \ts$ we can construct
	in linear time a~partition of nodes of $\mathcal{T}$ into clusters
	satisfying conditions~\Crefall.
\end{lemma}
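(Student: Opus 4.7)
The plan is to run a single post-order traversal of $\mathcal{T}$ and, at each node $v$, either merge $v$ with the open clusters of its children (making $v$ \emph{regular}) or finalize those open clusters and start a fresh singleton open cluster at $v$ (making $v$ a \emph{port}). Maintain the inductive invariant that the open cluster $\mathrm{Open}(v)$ of every already processed node is a subtree of $\mathcal{T}$ rooted at $v$ of size at most $m$. When processing $v$ with children $c_1,\ldots,c_d$ of open sizes $s_1,\ldots,s_d \le m$, set $S = 1 + \sum_i s_i$: if $S \le m$, mark $v$ regular and let $\mathrm{Open}(v) = \{v\} \cup \mathrm{Open}(c_1) \cup \cdots \cup \mathrm{Open}(c_d)$; otherwise mark $v$ a port, set $\mathrm{Open}(v) = \{v\}$, and finalize the children left to right by running an accumulator $a$ initialized to $0$, adding $s_i$ to $a$ whenever $a + s_i \le 2m-1$ and otherwise closing the current $a$ as a finalized cluster and restarting with $a = s_i$; after processing the last child, close the remaining value of $a$ as one more (``pending'') cluster. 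When the root is reached, its open cluster is declared a cluster as well.

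Properties \Cref{3} and \Cref{4} are immediate from the construction: every finalized cluster is built from open clusters of a consecutive block of children of a single port, so it is a forest of connected subgraphs of $\mathcal{T}$ whose roots are consecutive siblings, and inside each cluster every node is either regular (all its $\mathcal{T}$-children in the cluster) or a port (none of its $\mathcal{T}$-children in the cluster). The traversal performs $O(d_v)$ work at each $v$, giving total linear running time.

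For \Cref{2} the grouping rule never admits a child whose inclusion would push $a$ above $2m-1$, so every finalized cluster has size at most $2m-1$, and the root's cluster has size at most $m$ by the invariant. For the lower bound in \Cref{1}, each cluster has size at most $2m-1$, giving at least $|\mathcal{T}|/(2m-1) \ge |\mathcal{T}|/(2m) - 1$ clusters. For the upper bound, every port $v$ has $\sum_i s_i \ge m$ by definition, and the children's open clusters involved are pairwise disjoint subtrees of $\mathcal{T}$, so the number of ports is at most $|\mathcal{T}|/m$; the invariant $s_i \le m$ additionally forces $a \ge m$ at every non-pending closure (the child that failed to fit contributes at most $m$ yet pushed the sum above $2m-1$), so the total number of non-pending closed blocks across all ports is also at most $|\mathcal{T}|/m$; adding the one pending block per port and the root's own cluster yields at most $2|\mathcal{T}|/m + 1$ clusters.

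The step I expect to require the most care is the simultaneous bookkeeping of the two inequalities ``$|\mathrm{Open}(v)| \le m$'' and ``every non-pending closed block has size $\ge m$'': together they pin the regular/port threshold at $S = m$ and drive the counting for \Cref{1}, but each has to be verified by induction on the bottom-up pass coupled with a careful case analysis of the greedy accumulator rule, in particular ensuring that the child which triggers a closure cannot blow the block size past $2m-1$.
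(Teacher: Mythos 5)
Your proof is correct and follows essentially the same bottom-up DFS scheme as the paper's: classify $v$ as regular or port by comparing $1 + \sum_i s_i$ with $m$, greedily group the children's open subtrees of a port into consecutive blocks, and bound the cluster count by observing that non-final blocks have size at least $m$ while each port contributes at most one extra (possibly small) block. The only substantive difference is the closing rule in the greedy pass --- the paper closes a block as soon as its size reaches $m$, whereas you accumulate until the next child would overflow $2m-1$ --- but both variants yield identical block-size guarantees $[m, 2m-1]$ and hence the same bounds for \Crefall.
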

\begin{figure*}[ht]
	\vspace{-0.5cm}
	\centering
	\def\svgwidth{\textwidth}
	\input{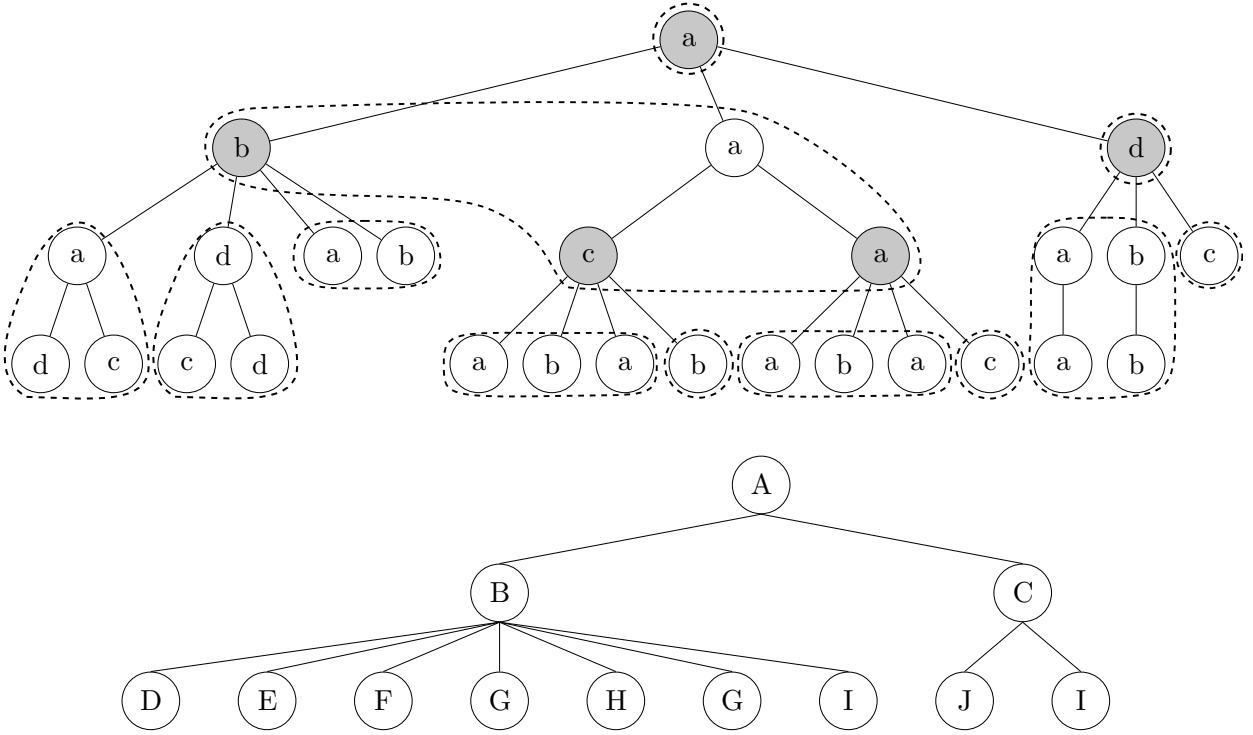}
	\caption{Clustering of tree for parameter $m=3$ and tree created by replacing clusters with new nodes.
	Marked nodes are port nodes.}
	\label{fig:trees}
\end{figure*}

\paragraph{Building the cluster tree}

We build the \emph{cluster tree} out of the clustering satisfying \Crefall: we replace each cluster with a new node and put
edges between new nodes if there was an edge between some
nodes in the corresponding clusters.
To retrieve the original tree $\T$ from the cluster tree and its labels
we need to know the degree of each port node in the original
tree (note that this depends not only on the cluster, but also on the particular cluster node,
e.g. two clusters may have the same structure but can have different port node degrees in $\T$).
Thus we store for each cluster node with $k$ ports a
\emph{degree sequence} $d_1, d_2, \ldots, d_k$ where $d_i$ is an 
outdegree of $i$-th port node in the input tree (in natural left-to-right order on leaves). 
For an illustration, see cluster replaced by cluster node labeled $B$
in Figure~\ref{fig:trees}, its degree sequence is $3, 2, 2$.
In section~\ref{sec:application_compressing}
we show that degree sequence can be stored efficiently
and along with cluster tree it is sufficient to retrieve and navigate $\T$.

\begin{definition}[Cluster structure]\label{def:cluster_structure}
For a labeled tree $\mathcal{T}$ and parameter $m$ we define \emph{cluster structure},
denoted $C(\mathcal{T})$.
The cluster structure consists of:
\begin{itemize}\setlength\itemsep{-0.2em}
	\item Ordered, rooted, labeled tree $\mathcal{T'}$ (called cluster tree) with 
	$\Theta\left(\frac{\ts}{m}\right)$ nodes,
	where each node represents a cluster,
	different labels correspond to different clusters
	and the induced clustering of $\T$ satisfies~\Crefall.
	
	\item For each node $v' \in \mathcal{T'}$,
	the degree sequence $d_{{v'},1}, \ldots d_{{v'},j}$
	where $d_{{v'},i}$ indicates that $i$-th port node
	in left-to-right order on leafs of cluster
	represented by $v'$ connects to $d_{{v'},i}$
	clusters of $\mathcal{T'}$.
	
	\item Look-up tables, which for a label of $\mathcal{T'}$
	allow to retrieve the corresponding cluster $C$.
\end{itemize}
\end{definition}

\begin{lemma}
\label{lem:cluster_structure}
For any labeled tree $\mathcal{T}$ and  $m$ we can construct in time $\Ocomp(\ts)$ cluster structure $C(\mathcal{T})$.
\end{lemma}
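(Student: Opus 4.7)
The plan is to build $C(\T)$ in three stages: compute a clustering, canonicalize cluster descriptions so that identical clusters receive identical labels, and then read off $\T'$, the degree sequences and the look-up tables. First I invoke Lemma~\ref{lem:clustering_procedure} to obtain in time $\Ocomp(\ts)$ a partition of $\T$ into clusters satisfying \Crefall. Walking $\T$ once in, say, DFS order and using the cluster-id of each node as a pointer, I can in $\Ocomp(\ts)$ time list the nodes of each cluster together with their port/regular status and their parent/child links inside the cluster. Replacing each cluster by a single new node, and putting an edge between two new nodes whenever the corresponding clusters were adjacent in $\T$, gives the unlabeled shape of $\T'$ with $\Theta(\ts/m)$ nodes; the whole construction is a linear pass.

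Next I produce a canonical encoding of every cluster. Each cluster $C$ is, by \Cref{C3} and \Cref{C4}, an ordered labeled forest on at most $2m-1$ nodes in which port nodes are distinguished leaves; I encode it by a standard balanced-parenthesis (or DFUDS) traversal, emitting for each visited node a triple consisting of its $\T$-label, a port/regular flag, and an open/close bracket. This yields, for every cluster, a string of length $\Ocomp(m)$ over an alphabet of size $\Ocomp(\sigma)$; concatenated over all clusters the total length is $\Ocomp(\ts)$. I then radix-sort these strings (padding with a sentinel to common length, then sorting character by character from right to left); because the total length and the alphabet size are both $\Ocomp(\ts)$, radix sort runs in $\Ocomp(\ts)$ time. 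Equal strings correspond to identical clusters, so sorting assigns each distinct cluster a unique integer label; I relabel the nodes of $\T'$ with these ids, completing the cluster tree of Definition~\ref{def:cluster_structure}.

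For the degree sequences I make one more linear pass: while enumerating the nodes of each cluster in left-to-right leaf order I read off, for every port node, its outdegree in $\T$ (which is known from the original adjacency lists). Storing these sequences together with the corresponding cluster node of $\T'$ takes $\Ocomp(\ts)$ time overall, since each port node is visited once. The look-up table is obtained for free from the sorting step: it is simply the list of distinct canonical encodings indexed by their assigned label, together with pointers into it, so both its construction and total size fit into $\Ocomp(\ts)$.

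The only nontrivial point is keeping the cluster-canonicalization step linear. This is where the bound on cluster size and on the alphabet used by the encoding matters: without the bound on $m$ one could not perform radix sort in $\Ocomp(\ts)$ time. Everything else is a sequence of standard linear-time tree traversals, so summing the three stages gives the claimed $\Ocomp(\ts)$ bound and establishes the lemma.
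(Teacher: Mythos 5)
Your proof is correct and follows essentially the same construction as the paper: build the clusters via Lemma~\ref{lem:clustering_procedure}, collapse each cluster to a node and add edges for cluster adjacency, label consistently, store degree sequences and a look-up table. The paper's own proof is only a brief sketch that simply asserts consistent labeling; you fill in the missing algorithmic detail (canonical balanced-parenthesis encodings plus radix sort) showing how to achieve it in $\Ocomp(\ts)$ time.
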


\section{Entropy estimation}\label{sec:entropy_estimation}
We show that entropy of labels of tree of $C(\mathcal{T})$
is upper bounded by mixed entropy of the input tree,
up to some small additive factor.
\begin{theorem}
\label{lem:main_estimation}
Let $\mathcal{T'}$ be a tree of structure $C(\T)$
from Lemma~\ref{lem:cluster_structure}
for parameter $m$,
obtained from $\mathcal{T}$.
Let $P$ be a string obtained by concatenation of labels of
$\mathcal{T'}$.
Then \emph{all} the following inequalities simultaneously hold:
\begin{align}
\label{main_thm:case1}
|P|H_0(P)
	&\leq
\ts H(\mathcal{T}) + \ts H_k(L) +
\Ocomp\left(\frac{\ts k \log \sigma}{m}
+ \frac{\ts \log m}{m} \right)\\
\label{main_thm:case2}
|P|H_0(P)
	&\leq
\ts H_k(\mathcal{T}|L) + \ts H_k(L)+
\Ocomp\left(\frac{\ts k \log \sigma}{m}
+ \frac{\ts \log m}{m} \right)\\
\label{main_thm:case3}
|P|H_0(P)
	&\leq
\ts H(\mathcal{T}) + \ts H_k(L|\mathcal{T})+
\Ocomp\left(\frac{\ts (k+1) \log \sigma}{m}
+ \frac{\ts \log m}{m} \right)\enspace .
\end{align}
Moreover, if $m = \Theta(\log_\sigma \ts)$,
the additional summands are bounded by:
$\Ocomp\left(\frac{\ts k \log \sigma}{\log_\sigma \ts} +
  \frac{\ts \log \log_\sigma \ts}{\log_\sigma \ts}
  \right)$
and
$\Ocomp\left(\frac{\ts(k+1) \log \sigma}{\log_\sigma \ts} +
\frac{\ts \log \log_\sigma \ts}{\log_\sigma \ts}
\right)$, 
respectively.
For $k = o(\log_\sigma \ts)$ both those values are $o(\ts \log \sigma)$.
\end{theorem}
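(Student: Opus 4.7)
The plan is to exhibit, for each of the three inequalities, a subprobability distribution $q$ on clusters with $-\sum_{v' \in \mathcal{T}'} \log q_{C(v')}$ bounded by the right-hand side; Gibbs' inequality will then give $|P|H_0(P) \le -\sum_{v'} \log q_{C(v')}$, as needed. I would build $q$ via a multi-description trick: design a prefix-free code on descriptions $D$ (each decoding to a unique cluster) and put $q_C := \sum_{D \to C} 2^{-|D|}$, so Kraft gives $\sum_C q_C \le 1$. For each cluster occurrence $v' \in \mathcal{T}'$ pick a context-tailored description $D_{v'}$; then $-\log q_{C(v')} \le |D_{v'}|$, reducing the task to bounding $\sum_{v'}|D_{v'}|$.

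Take $D_{v'}$ to consist of: (i) a fixed-length $k\log\sigma$-bit block storing the context $K_{v'}$ (the $k$ labels immediately above the subtree-roots of $C(v')$ in $\mathcal{T}$, padded if the path is shorter); and (ii) a preorder traversal of the cluster emitting, for each node, a ``type-and-degree'' token (``port'' or ``regular with degree $i$'') followed by the label. By \Cref{C3} the subtree-roots of a cluster share a common parent, so one $K_{v'}$ suffices; the true $k$-context $K_v$ of any deeper node is reassembled during decoding from $K_{v'}$ and already-seen in-cluster ancestor labels, so context-dependent codes can be applied per node. The fixed length of (i) makes the global Kraft sum $\sigma^k\cdot\sigma^{-k}=1$, and an $O(\log m)$-bit size/forest-arity header in (ii) makes the cluster body self-delimiting.

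Set $\alpha = |\text{ports}|/\ts = O(1/m)$ (each port corresponds to an edge of $\mathcal{T}'$ by \Crefall), so $-\log\alpha = \Theta(\log m)$. For Case~\eqref{main_thm:case1}, the type-and-degree token uses probability $\alpha$ for ``port'' and $(1-\alpha) d_i/\ts$ for ``regular with degree $i$'', and labels are coded by $\prob(l_v|K_v)$; regular-degree costs sum to $\le\ts H(\mathcal{T})$, label costs to $\ts H_k(L)$, port-type bits to $|\text{ports}|\log(1/\alpha) = O(\ts\log m/m)$, and the $-\log(1-\alpha)$ slack to $O(\ts/m)$ --- together with the prefix's $O(\ts k\log\sigma/m)$ this gives the stated bound. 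Case~\eqref{main_thm:case2} codes each regular node's (degree,label) jointly as $(1-\alpha)\prob(l_v, d_v|K_v)$, yielding $\ts H_k(L) + \ts H_k(\mathcal{T}|L)$; for ports $-\log\prob(l_v|K_v)\le -\log\prob(l_v, d_v|K_v)$ absorbs the port label cost into the target. Case~\eqref{main_thm:case3} proceeds as Case~\eqref{main_thm:case1} but codes labels via $\prob(l_v|K_v, d_v)$ on regular nodes (where $d_v$ is revealed by the degree token), giving $\ts H_k(L|\mathcal{T})$; ports must still use $\prob(l_v|K_v)$ (their $d_v$ is not in the cluster label), paying up to an extra $\log\sigma$ per port in the worst case, i.e.\ $O(\ts\log\sigma/m)$, which with the prefix produces the $(k+1)\log\sigma/m$ factor.

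The main obstacle is the Case~\eqref{main_thm:case3} per-port bookkeeping: bounding the gap $\log(\prob(l_v|K_v,d_v)/\prob(l_v|K_v))$ per port by $O(\log\sigma)$ in aggregate (using $\prob(l_v|K_v, d_v) \le 1$ together with a careful amortization over ports, or equivalently by \emph{refining} the cluster labels with port-degrees --- a refinement that cannot decrease $|P|H_0$ --- so that every node's degree is known when coding its label). Verifying $K_v$-reconstructibility in preorder decoding and prefix-freeness of each per-context sub-code are immediate from \Crefall, and the ``moreover'' instantiation with $m = \Theta(\log_\sigma \ts)$ is direct substitution.
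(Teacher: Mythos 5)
Your framework is essentially the paper's: invoke Gibbs' inequality (the paper's Lemma~\ref{theoremP}), design a sub-probability weight $q$ over cluster descriptions in the adaptive-arithmetic-coding style, and decompose each description into a context header plus a preorder stream of per-node tokens. The choice of a fixed-length $k\log\sigma$ context header and the empirical port fraction $\alpha$ instead of the paper's variable-length header and flat $1/m$ are harmless variants (the paper explicitly notes the padded-context option is reducible to its own). Your accounting for Cases~\eqref{main_thm:case1} and~\eqref{main_thm:case2} is sound; in particular the Case~\eqref{main_thm:case2} trick of charging port labels against $\prob(l_v,d_v\mid K_v)\le\prob(l_v\mid K_v)$ is correct and matches the paper's calculation.

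Case~\eqref{main_thm:case3} has a genuine gap. You code port labels with $\prob(l_v\mid K_v)$ and assert this ``pays up to an extra $\log\sigma$ per port'' against the $\ts H_k(L\mid\T)$ budget; but neither $\prob(l_v\mid K_v)\ge 1/\sigma$ nor $\prob(l_v\mid K_v)\ge\prob(l_v\mid K_v,d_v)/\sigma$ holds in general (the only universal relation is $\prob(l_v\mid K_v)\ge\prob(d_v\mid K_v)\prob(l_v\mid K_v,d_v)$, and $-\log\prob(d_v\mid K_v)$ is unbounded). So the per-port gap cannot be bounded by $O(\log\sigma)$ pointwise, and your ``amortization via $\prob(l_v\mid K_v,d_v)\le1$'' remedy does not produce such a bound either. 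Your alternative remedy --- refining cluster labels with port degrees so that all labels can be coded by $\prob(l_v\mid K_v,d_v)$ --- \emph{does} work (since refinement only increases $H_0$, and the port-degree tokens can be charged to $-\log\prob(d_v\mid K_v)$, hence to $\ts H(\T)$ by log-sum), but you present the two remedies as ``equivalent,'' which they are not, and the refinement route would in fact eliminate the extra $\ts\log\sigma/m$ term rather than produce the stated $(k+1)$ factor. The paper's Case~\eqref{main_thm:case3} takes the simplest route matching the claimed bound: it abandons entropy coding of port labels entirely and sets $q=1/\sigma$ for each port's label, giving a flat $\log\sigma$ bits per port --- this is $O(\ts\log\sigma/m)$ in total since there are $O(\ts/m)$ ports, regular labels alone contribute $\le\ts H_k(L\mid\T)$, and no cross-term arises.
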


We prove Theorem~\ref{lem:main_estimation} in two steps.
First, we devise a special representation of nodes of $\mathcal{T'}$.
The entropy of this representation is not larger than the entropy of $P$,
so we shall upper bound the entropy of the former.
To this end we use the following lemma,
which is simple corollary from Gibbs inequality,
see~\cite{aczel1973shannon} for a proof.
\begin{lemma}[\cite{aczel1973shannon}]\label{theoremP}
	Let $w \in \Gamma^*$ be a string and $q:\Gamma \rightarrow \mathbb{R}^+$ be a function such that $\sum_{s \in \Gamma} q(s) \leq 1$.
	Then
$
	|w|H_0(w) \leq -\sum_{s \in \Gamma} n_s \log q(s)
$,
	where $n_s$ is the number of occurrences of $s$ in $w$.
\end{lemma}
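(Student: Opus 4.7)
The plan is to reduce the statement to the non-negativity of the Kullback--Leibler divergence, which itself follows in one line from Jensen's inequality applied to the concave function $\log$.

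First I would set $p(s) := n_s/|w|$, so that by the definition of zeroth order entropy $|w|H_0(w) = -\sum_{s \in \Gamma} n_s \log p(s)$, and rewrite the target inequality $|w|H_0(w) \leq -\sum_s n_s \log q(s)$ as
\[
\sum_{s \in \Gamma} n_s \log \frac{p(s)}{q(s)} \;\geq\; 0,
\]
using the convention $0 \log 0 = 0$ so that letters with $n_s = 0$ contribute nothing. After dividing by $|w|$, this is precisely the claim that the generalised KL divergence of the empirical distribution $p$ from the sub-probability weights $q$ is non-negative.

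Second, I would verify this by a Jensen inequality step. Restricting the sum to the support $\Gamma^+ := \{s \in \Gamma : n_s > 0\}$ (where the ratio $q(s)/p(s)$ is well defined since $q(s) > 0$ by hypothesis), concavity of $\log$ gives
\[
\sum_{s \in \Gamma^+} p(s) \log \frac{q(s)}{p(s)} \;\leq\; \log \sum_{s \in \Gamma^+} p(s) \cdot \frac{q(s)}{p(s)} \;=\; \log \sum_{s \in \Gamma^+} q(s) \;\leq\; \log \sum_{s \in \Gamma} q(s) \;\leq\; \log 1 \;=\; 0,
\]
where the last two inequalities use $q(s) > 0$ for all $s \in \Gamma$ (monotonicity of the sum) and the hypothesis $\sum_{s \in \Gamma} q(s) \leq 1$, respectively. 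Negating both sides and multiplying by $|w|$ yields the desired bound.

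I do not anticipate any substantive obstacle: the argument is a textbook application of Gibbs' inequality. The only points requiring care are (i) restricting the KL sum to the support $\Gamma^+$ so that Jensen's inequality applies, which is justified because letters with $n_s = 0$ contribute nothing on either side of the target inequality (here the assumption $q: \Gamma \to \mathbb{R}^+$ is crucial, as it prevents any $\log 0$ on the right-hand side), and (ii) invoking the sub-probability hypothesis $\sum_s q(s) \leq 1$ to obtain $\log \sum_s q(s) \leq 0$ rather than the equality that would hold for a genuine probability distribution.
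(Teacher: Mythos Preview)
Your proof is correct and is precisely the approach the paper points to: the paper does not spell out a proof but states that the lemma ``is a simple corollary from Gibbs' inequality'' and cites \cite{aczel1973shannon}, and your argument is exactly the standard derivation of Gibbs' inequality via Jensen applied to $\log$, with the sub-probability hypothesis $\sum_s q(s)\le 1$ handled in the final step.
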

Lemma~\ref{theoremP} should be understood as follows:
we can assign each letter in a string a ``probability''
and calculate the ``entropy'' for a string using those ``probabilities.''
The obtained value is not smaller than the true empirical entropy.
Thus, to upper-bound the entropy, it is enough to devise an appropriate function $q$.

\paragraph{Cluster representation}
The desired property of the representation
is that for each node $v$ in the cluster we can
uniquely determine its context (i.e.\ labels of $k$ nodes on the path from $v$ to the root)
in original tree $\mathcal{T}$.
\begin{definition}[Cluster description]
	Given a cluster $C$ occurring in context $K$
	and consisting of subtrees $\mathcal{T}_1, \ldots, \mathcal{T}_l$,
	the \emph{cluster description}, denoted by $R_{K,C}$,
	is a triple $(K, N_C, V_C)$, where 
\begin{itemize}\setlength\itemsep{-0.2em}
	\item $K$ is a context (of size at most $k$) preceding
		roots of the trees in cluster, i.e.\ for each
		root $r$ of tree in a cluster $K = K_{r}$ holds,
		where $K_v$ is the context of vertex $v$ in $\mathcal{T}$;
		by the construction roots of trees in a cluster
		have the same context in $\T$ as they have the same parent.
	\item $N_C$ is the total number of nodes in this cluster;
	\item $V_C$ is a list of descriptions of nodes of $C$, according to preorder ordering.
	If a node $v$ is a port then its description is $(1, l_v)$,
	if it is a normal node, then it is $(0, l_v, d_v)$,
	where $d_v$ is the degree of the node $v$ (in the cluster)
	and $l_v$ is its label.
\end{itemize}
\end{definition}
Note that we do not store the degrees of port nodes, as they
are always $0$.

\begin{example}
	The description for $k=1$ and central node (the one which is labeled $B$ in cluster tree)
	from Figure~\ref{fig:trees} is:
	$(K=a, N = 4, V = \{ (1, b), (0, a, 2), (1, c), (1, a) \})$.
\end{example}

The following lemma says that the cluster description
satisfies desired properties:
\begin{lemma}
\label{lem:cluster_uniqely}
Cluster description uniquely defines a cluster
and context $K_v$ in $\mathcal{T}$ for each vertex $v$
\end{lemma}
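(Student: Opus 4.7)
The plan is to prove the two claims separately: first that the pair $(N_C, V_C)$ determines the cluster as a labeled, port-marked forest, and second that adding $K$ determines the context $K_v$ of every node in the cluster.

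For the reconstruction of the cluster from $V_C$ and $N_C$ I would use the standard ``preorder with out-degrees'' reconstruction of an ordered rooted forest. Read $V_C$ left to right; each entry gives a triple (label, degree, port/regular flag), where the degree of a port is $0$ by convention. Simulate a preorder DFS by maintaining a stack of ancestors still expecting children, each tagged with how many more it needs. Each new node is attached as the next child of the top-of-stack node (popping nodes whose quota is filled); whenever the stack empties before all $N_C$ entries are consumed, the next entry starts a new tree of the forest. Because the degree of every regular node and the leaf status of every port node are recorded explicitly, this process is deterministic and unambiguous, and terminates after precisely $N_C$ steps. The resulting labeled, port-marked forest is exactly the cluster, which is consistent with condition~\Cref{C3} (the result is a forest of subtrees whose roots are consecutive siblings of $\mathcal{T}$).

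For the contexts, I would compute $K_v$ inductively along a preorder traversal of the reconstructed forest. By~\Cref{C3} all roots $r$ of the trees in the cluster are consecutive siblings in $\mathcal{T}$, hence share the same ancestor path in $\mathcal{T}$ and therefore the same context; by definition of the cluster description this common context is $K$, so $K_r = K$ for every such root. For a non-root node $v$ of the cluster with parent $p$, condition~\Cref{C4} guarantees that $p$ is also in the cluster, and the path in $\mathcal{T}$ from the root to $v$ is exactly the path to $p$ extended by the edge to $v$; consequently $K_v$ is obtained from $K_p$ by appending $l_p$ and truncating to the last $k$ symbols. The labels $l_p$ are all stored in $V_C$, so this rule is fully evaluable from the description.

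The only potential obstacle is confirming that the preorder-with-degrees reconstruction is unambiguous when the cluster is a forest rather than a single tree, but this is immediate: the total size $N_C$ tells us when to stop, and a new tree is begun precisely when the stack becomes empty with entries remaining, so the forest is recovered without choice.
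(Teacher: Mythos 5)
Your proof is correct and follows essentially the same route as the paper: reconstruct the cluster as an ordered forest from the preorder list of out-degrees (with port nodes as degree-$0$ leaves), then derive each $K_v$ recursively from the stored root context $K$ and the ancestors' labels inside the cluster, justified by~\Cref{3} and~\Cref{4}. Your version is merely more explicit about the stack simulation and the forest delimitation via $N_C$; no substantive difference.
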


To prove the Theorem~\ref{lem:main_estimation}
instead of estimating the entropy of $P$ we will estimate the entropy
of string where letters are descriptions of each cluster.
We assign each description value $q$ and use Lemma~\ref{theoremP}.
The values are assigned by assigning each element of the
description separate value of $q$ and multiplying it with previous
values for each element.
The value $q$ depends on previously encoded elements,
this method is similar to adaptive arithmetic coding.
Note that storing, which nodes are port
nodes inside a cluster description, uses additional $\Ocomp(\ts\log m / m)$ bits,
if done naively (e.g.\ in a separate structure) it would take the same memory.
The full proof is given in the Appendix.

\section{Application --- succinct data structure for labeled trees}
\label{sec:application_compressing}
We demonstrate how our tree clustering technique can be used
for indexing labeled trees.
Given a~tree $\T$ we choose $m = \Theta(\log_\sigma \ts)$
(the exact constant is determined later).
Then the number of different clusters is $\Ocomp(\ts^{1-\alpha})$ for some constant $\alpha > 0$,
thus cluster tree $\mathcal{T'}$ from $C(\T)$
 can be stored using
any succinct representation (like balanced parentheses or DFUDS)
in space $\Ocomp(\ts /\log_\sigma \ts) = o(\ts)$ (for small enough $\sigma$).
At the same time clusters are small enough so that
we can preprocess them and answer all relevant queries
within the clusters in constant time,
the needed space is also $o(\ts)$.

We construct the clustering $C(\mathcal{T})$
for $m = \beta \log_\sigma n$, where $\beta$ is a constant
to be determined later.
Let $\T'$ denote the unlabeled tree of $C(\mathcal{T})$,
i.e.\ the cluster tree~\Cref{1} stripped of node labels.
Our structure consists of:
\begin{enumerate}[(T1)]\setlength\itemsep{-0.2em}
	\item\label{T1} Unlabeled tree $\mathcal{T'}$, $|\mathcal{T'}| = \Ocomp(\ts / \log_\sigma \ts)$.
	\item\label{T2} String $P$ obtained by concatenating labels of
					the cluster tree of $C(\mathcal{T})$
					in preorder ordering.
	\item\label{T3} Degree sequences for each node of $\mathcal{T'}$.
	\item\label{T4} Precomputed arrays for each operation,
					for each cluster 
					(along with look-up table from $C(\mathcal{T})$ to decode
					cluster structure from labels).
\end{enumerate}
We encode each of \Trefall separately, using known tools.

\paragraph{\Tref{1}: Encoding tree $\mathcal{T'}$}
There are many succinct representation for unlabeled trees which allow
fast navigational queries, for example~\cite{fullyFunctionalTrees, succintTreesPractical,ultraSuccintTrees,farzan2009universal,geary2006succinct}.
Those methods use $2\tps + o(\tps)$ bits for tree of size $\tps$,
the exact function suppressed by the $o(\tps)$ depends on the data structure.
Since in our case the tree $\mathcal{T'}$ is already of size $\Ocomp(\ts/\log_\sigma \ts)$,
we can use $\Ocomp(\tps)$ bits for the encoding of $\T'$,
so we do not care about exact function hidden in $o(\tps)$.
This is of practical importance,
as the data structures with asymptotically smallest memory consumption,
like~\cite{fullyFunctionalTrees},
are very sophisticated, thus hard to implement and not always suitable for practical purposes.
Thus we can choose theoretically inferior,
but more more practical data structure~\cite{succintTreesPractical},
what is more we can even use a constant number of such data structures, as we are interested
only in $\Ocomp(\mathcal{|T'|})$ bound.

For the sake of the argument,
let us choose one method, say \cite{lu2008balanced}.
We use it to encode $\mathcal{T'}$ on $\Ocomp(\mathcal{|T'|})$ bits,
this encoding supports the following operations in constant time:
\procfont{parent($v$)} --- parent of $v$;
\procfont{firstchild($v$)} --- first child of a node $v$;
\procfont{nextsibling($v$)} --- right sibling of a node $v$;
\procfont{preorder-rank($v$)}  --- preorder rank of a node $v$;
\procfont{preorder-select($i$)} --- returns a node whose preorder rank is $i$;
\procfont{lca($u$, $v$)} --- returns the lowest common ancestor of $u, v$;
\procfont{childrank($v$)} --- number of siblings preceding a node $v$;
\procfont{child($v$, $j$)} --- $j$-th child of $v$;
\procfont{depth($v$)} --- distance from the root to $v$;
\procfont{level\_ancestor($v$, $i$)} --- ancestor at distance $i$ from $v$.

\paragraph{\Tref{2}: Encoding preorder sequence of labels} 
By Theorem~\ref{lem:main_estimation} it is enough
to encode the sequence $P$ using roughly $|P|H_0(P)$ bits,
in a way that allows for $\Ocomp(1)$ time access to its elements.
This problem was studied extensively,
and many (also practical) solutions were developed~\cite{GonzalezNStatistical,
FerraginaV07SimpStat,
GrossiDynamicIndexes,
grossi2003high}.
In the case of zero order entropy,
most of these methods are not overly complex:
they assign each element a prefix code,
concatenate prefix codes
and store some simple structure for storing
information, where the code words begin/end.

However,
we must take into account that alphabet of $|P|$ can be large
(though, as shown later, not larger than $\ts^{1-\alpha}$, $0 < \alpha < 1$).
This renders some of previous results inapplicable,
for example the simplest (and most practical) structure
for alphabet of size
$\sigma'$ need additional $\Ocomp(|P|\log \log |P| / \log_{\sigma'} |P|)$ which can be as large as $\Ocomp(|P|\log \log |P|)$~\cite{FerraginaV07SimpStat}.
As $|P| = \ts /  \log_\sigma \ts$ this would be
slightly above promised bound in Theorem~\ref{thm:theorem_tree_structure} 
($\Ocomp(\ts \log \log \ts / \log_\sigma \ts)$ vs $\Ocomp(\ts \log \log_\sigma \ts / \log_\sigma \ts)$).

Still, there are structures achieving $|P|H_0(P) + o(|P|)$ bits
for alphabets of size ${\ts}^{1-\alpha}$,
for example the well-known one by P\v{a}tra\c{s}cu~\cite{patrascu2008succincter}.

\paragraph{\Tref{3}: Encoding the degree sequence}
To navigate the tree we need to know which
children of a cluster belong to which port node.
We do it by storing degree sequence
for each cluster node of $\mathcal{T'}$
and design a structure which, given a node $u'$ of $\mathcal{T'}$
and index of a port node $u$ in the cluster represented by $u'$,
returns the range of children of $u'$
which contain children of $u$ in $\mathcal{T}$.
We do it by storing rank/select structure for bitvector representing
degrees of vertices of $\mathcal{T'}$ in unary.
\begin{lemma}
\label{lem:label_seq}
We can encode all degree sequences of nodes of $\mathcal{T'}$
using $\Ocomp (|\mathcal{T'}|)$ bits in total,
such that given node $u$ of $\mathcal{T'}$
and index of port node $v$ in cluster represented by $u$
the structure returns a pair of indices $i_1, i_2$,
such that the children of $p$ (in $\T$)
are exactly the roots of trees in clusters
in children $i_1, i_1+1, \ldots, i_2-1$ of $u$.
Moreover we can answer reverse queries,
that is, given an index $x$ of $x$-th child of 
$u \in \mathcal{T'}$ find
port node which connects to this child. 
Both operations take $\Ocomp(1)$ time.
\end{lemma}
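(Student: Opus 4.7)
The plan is to concatenate all degree sequences into a single bitvector under unary encoding and equip it with standard rank/select support, so that both queries reduce to a constant number of rank/select calls. I visit the cluster nodes $u' \in \mathcal{T'}$ in preorder of $\mathcal{T'}$; for each such $u'$ with degree sequence $d_{u',1},\ldots,d_{u',j_{u'}}$ I append the unary block $1^{d_{u',1}}0\,1^{d_{u',2}}0 \cdots 1^{d_{u',j_{u'}}}0$ to a bitvector $B$. Alongside $B$ I maintain a companion bitvector $E$ of the same length whose $1$s mark the positions that close a per-cluster block, i.e.\ the final $0$ of each cluster's block.

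The first key step is the size analysis. The number of $1$s in $B$ equals $\sum_{u'} \sum_i d_{u',i}$, which counts parent-child edges in $\mathcal{T'}$ and therefore equals $\tps - 1$. By the convention used in the construction, a port node is precisely a cluster leaf that attaches at least one child cluster, so every $d_{u',i} \geq 1$; hence the number of $0$s in $B$ is at most the number of $1$s, and $|B| \leq 2(\tps - 1)$, with the same bound holding for $E$. Augmenting $B$ and $E$ with Jacobson-style $o(|B|)$-bit rank/select indexes keeps the total storage at $\Ocomp(\tps)$ bits.

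Next I implement the queries using only $\Ocomp(1)$-time rank/select primitives. For the forward query, given $u'$ and a port index $i$, I first obtain the preorder rank $p$ of $u'$ in $\mathcal{T'}$ via \Tref{1}, and then locate the start $s$ of $u'$'s block in $B$ by a single \procfont{select} call on $E$ (treating $s=0$ when $p=1$). Writing $r$ and $\rho$ for the number of $0$s and $1$s of $B$ strictly before position $s$, I use \procfont{select} on the $0$s of $B$ to find the positions $e_{i-1}$ and $e_i$ that close ports $i-1$ and $i$ inside $u'$'s block; then $i_1$ and $i_2$ are obtained by taking the \procfont{rank} of $1$s at $e_{i-1}$ and $e_i$, subtracting $\rho$, and adding $1$. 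For the reverse query, given $u'$ and a child index $x$, I \procfont{select} the $(\rho + x)$-th $1$ in $B$, advance to the next $0$ with another \procfont{select}, and subtract $r$ from the rank of $0$s at that position to recover the port index.

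The main obstacle I expect is the size bookkeeping: ensuring that $|B|$ scales as $\Ocomp(\tps)$ rather than $\Ocomp(\ts)$. This hinges on the convention that a pure leaf of $\mathcal{T}$ is stored as a regular node (its in-cluster degree trivially matches its $\mathcal{T}$-degree), so that ports are exactly the cluster leaves with at least one outgoing child cluster, and the total number of port entries is bounded by $\tps - 1$. Once this accounting is in place, correctness and the $\Ocomp(1)$ running time of both queries follow directly from the standard semantics of bitvector rank/select.
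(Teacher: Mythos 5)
Your construction is essentially the same as the paper's: both concatenate the degree sequences in preorder into a unary-encoded bitvector, maintain a companion bitvector marking cluster boundaries, attach $\Ocomp(1)$-time rank/select support to both, and answer forward and reverse queries via a constant number of rank/select calls. The only differences are cosmetic (you swap the roles of $0$ and $1$ in the unary blocks), and you are somewhat more explicit about why the total length is $\Ocomp(\tps)$ rather than $\Ocomp(\ts)$ — namely, that every port entry is at least $1$ so the number of separators is bounded by the number of $\T'$-edges — which is a fact the paper uses implicitly.
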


\paragraph{\Tref{4}: Precomputed tables}
We want to answer all queries in each cluster in constant time,
we start by showing that there are not many clusters of given size.
\begin{lemma}\label{ref:lem_size_clusters}
There are at most $2^{2m'} \sigma^{m'}$
different clusters of size $m'$.
\end{lemma}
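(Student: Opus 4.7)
The plan is a direct counting argument: exhibit an injection from the set of size-$m'$ clusters into the product $(\text{forest shapes}) \times \Sigma^{m'}$ and bound each factor separately. By property~\Cref{3} a cluster is an ordered rooted forest on $m'$ nodes, and by~\Cref{4} each node carries a label from $\Sigma$. The number of ordered rooted forests on $m'$ nodes is the Catalan number $C_{m'}=\frac{1}{m'+1}\binom{2m'}{m'}\leq 4^{m'}=2^{2m'}$; one clean way to see this is to attach a virtual root above the forest, obtaining an ordered rooted tree on $m'+1$ nodes, which in turn admits a balanced-parenthesis encoding of length $2m'$. Each of the $m'$ nodes independently carries one of $\sigma$ labels, contributing at most $\sigma^{m'}$ further choices, and multiplying the two estimates yields the claimed bound $2^{2m'}\sigma^{m'}$.

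The only subtle point is whether the port-versus-regular distinction of leaves should contribute an additional factor. In our design it does not: the ports inside any given cluster are identified at decoding time from the separately stored degree sequences of~\Tref{3}, since a leaf is a port exactly when the hosting cluster contributes a nonzero entry to that sequence. Hence two clusters that differ only in which leaves are marked as ports may safely share a ``cluster label'' in the look-up table of~\Tref{4} without ambiguity, and no extra factor is incurred. Should one nevertheless wish to bake the flag into the cluster description itself, one may fold it into an alphabet of size $2\sigma$; this blows the bound up only to $2^{2m'}(2\sigma)^{m'}=2^{3m'}\sigma^{m'}=2^{O(m')}\sigma^{m'}$, which still suffices for every asymptotic consequence in Sections~\ref{sec:entropy_estimation} and~\ref{sec:application_compressing}.

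I do not foresee a real obstacle: the Catalan bound is elementary, the label count is immediate, and the port bookkeeping is already handled by the degree-sequence component~\Tref{3} of the data structure. Hence the lemma follows by a one-shot multiplication of two standard counts.
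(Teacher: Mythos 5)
Your counting plan --- bound the forest shapes by the Catalan number $C_{m'}\le 2^{2m'}$ via balanced parentheses and multiply by $\sigma^{m'}$ for the labels --- is essentially the same device the paper uses: its proof encodes a cluster by a balanced-parenthesis string of length $2m'$ plus a length-$m'$ label string. So on the main point you and the paper agree.

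Where you diverge is the port/regular flag, and there your first argument has a gap. You claim a leaf can be recognized as a port at decoding time from the degree sequences of~\Tref{3}, so the flag need not live in the cluster label. But the degree-sequence structure is indexed \emph{by port index}, not by leaf position in the cluster: Lemma~\ref{lem:label_seq} takes as input ``the index of port node $v$ in the cluster represented by $u$,'' and the reverse query returns a port \emph{index}. Translating between a leaf of the cluster and its port index is exactly the job of the per-label lookup table~\Tref{4}, so the cluster label must already distinguish port from regular leaves. Moreover, two clusters with identical shape and identical label string can genuinely differ in which leaves are ports (a leaf of $\T$ is a regular degree-$0$ node, while a node whose children spilled into lower clusters is a port), and Definition~\ref{def:cluster_structure} requires such clusters to receive different labels. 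So the flag is not redundant. Your fallback is the right one, and in fact it is what the paper's own proof actually yields: the paper's listed encoding includes a length-$m'$ port bitvector (and a unary length field), so read literally it bounds the count by $2^{3m'}\sigma^{m'}$ rather than the stated $2^{2m'}\sigma^{m'}$ --- a harmless slackness that, as you observe, only shifts the constant $\tfrac18$ in the choice $m=\tfrac18\log_\sigma\ts$ and leaves every downstream $\Ocomp(\ts^{1-\alpha})$ and $o(\ts)$ bound intact.
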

It is sufficient to choose $m$ in clustering
as $m=\frac{1}{8}\log_\sigma \ts$ (or $1$ if this is smaller than $1$)
assuming that $2 \leq \sigma \leq \ts^{1-\alpha}$, $\alpha > 0$,
then number of different
clusters of length at most $2m-1$ is
$
\sum_{i=1}^{2m-1} 2^{2i} \sigma^{i} \leq \Ocomp(n^{1-\alpha})
$.

We precompute and store the answers for each query for each cluster.
As every query takes constant number of arguments
and each argument ranges over $m$ values,
this takes at most $\Ocomp(\ts^{1-\alpha}) \cdot \Ocomp(\log_\sigma^c \ts) = o(\ts)$ bits,
where $c$ is a constant.

Additionally we make tables for accessing $i$-th
(in left-to-right order on leaves)
port node of each cluster, 
as we will need this later to navigate more involved queries.

\paragraph{Putting it all together}
We now show that above structures can be combined into
a succinct data structure for trees.
Note that previously we used the fact that $\sigma \leq \ts^{1-\alpha}$,
in appendix we generalize for a case $\sigma = \omega(\ts^{1-\alpha})$.

The main idea of Theorem~\ref{thm:theorem_tree_structure} is that 
for each query if both arguments and the answer are in the same cluster 
then we can use precomputed tables,
for other we can query the structure for $\mathcal{T'}$ and reduce it to the former
case using previously defined structures.
Similar idea was used in other tree partition based structures like~\cite{geary2006succinct}
(yet this solution used different tree partition method).
\begin{theorem}\label{thm:theorem_tree_structure}
Let $\mathcal{T}$ be a labeled tree with labels
from an alphabet of size $\sigma \leq \ts^{1-\alpha}, \alpha > 0$.
Then we can build the tree structure which consumes the number of bits bounded by 
\emph{all} of the chosen value from the list below:
\begin{align*}
&\ts H(\mathcal{T}) + \ts H_k(L) +
	\Ocomp\left(\frac{\ts k\log \sigma}{\log_\sigma \ts}
	+ \frac{\ts \log \log_\sigma \ts}{\log_\sigma \ts} \right)\\
&	\ts H_k(\mathcal{T}|L) + \ts H_k(L)+
	\Ocomp\left(\frac{\ts k \log \sigma}{\log_\sigma \ts}
	+ \frac{\ts \log \log_\sigma \ts}{\log_\sigma \ts} \right)\\
&	\ts H(\mathcal{T}) + \ts H_k(L|\mathcal{T})+
	\Ocomp\left(\frac{\ts (k+1) \log \sigma}{\log_\sigma \ts}
	+ \frac{\ts \log \log_\sigma \ts}{\log_\sigma \ts} \right)
\end{align*}
It supports 
\procfont{firstchild(u)}, \procfont{parent(u)},
\procfont{nextsibling(u)}, \procfont{lca(u,v)},
\procfont{childrank($u$)}, \procfont{child($u$,$i$)} and \procfont{depth($u$)}
 operations in $\Ocomp(1)$ time;
 moreover
 with additional $\Ocomp(\ts (\log \log \ts)^2 / \log_\sigma \ts)$ bits 
 it can support \procfont{level\_ancestor($v$, $i$)} query.
\end{theorem}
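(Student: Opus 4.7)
The plan is to sum the bits used by \Trefall and invoke Theorem~\ref{lem:main_estimation}. With $m=\beta\log_\sigma \ts$ we have $\tps = \Ocomp(\ts/\log_\sigma \ts)$, so \Tref{1} costs $\Ocomp(\tps)$ bits, \Tref{3} costs $\Ocomp(\tps)$ bits by Lemma~\ref{lem:label_seq}, and the precomputed tables of \Tref{4} are sublinear by Lemma~\ref{ref:lem_size_clusters} and the preceding counting argument. The dominant term is \Tref{2}; encoding $P$ with P\v{a}tra\c{s}cu's succincter~\cite{patrascu2008succincter} costs $|P|H_0(P)+o(|P|)$ bits with $\Ocomp(1)$ random access, applicable because the alphabet of $P$ has size at most $\ts^{1-\alpha}$. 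Substituting each of the three bounds on $|P|H_0(P)$ from Theorem~\ref{lem:main_estimation} into the total yields precisely the three bounds in the statement; the $\Ocomp(\ts\log\log_\sigma \ts/\log_\sigma \ts)$ redundancy term absorbs the analogous $\Ocomp(\ts\log m/m)$ error from Theorem~\ref{lem:main_estimation}.

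\textbf{Constant-time navigation.} Every node $u\in\T$ is addressed as a pair $(u',i)$, where $u'\in\T'$ is the cluster node containing $u$ and $i$ is the preorder index of $u$ inside that cluster; the correspondence in both directions is maintained in $\Ocomp(1)$ using Lemma~\ref{lem:label_seq} together with the port-node access tables prepared in \Tref{4}. For each query the strategy is a case split: if the arguments and the answer lie inside a single cluster, a precomputed table of \Tref{4} returns the answer directly; otherwise we lift the query to $\T'$, apply the corresponding $\Ocomp(1)$ operation from \Tref{1}, and finish with an in-cluster table lookup. Concretely, \procfont{parent} at a cluster root invokes \procfont{parent} in $\T'$ and then locates the hosting port node through Lemma~\ref{lem:label_seq}; \procfont{firstchild} and \procfont{nextsibling} descend to a child cluster whenever the intra-cluster attempt reaches a port; \procfont{child} and \procfont{childrank} combine intra-cluster child counts with the child-range translation of Lemma~\ref{lem:label_seq}; and \procfont{lca}$(u,v)$ reduces to \procfont{lca}$(u',v')$ in $\T'$ followed by an in-cluster \procfont{lca} between the two entry port nodes. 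For \procfont{depth} I additionally store, for each cluster root, its depth in $\T$ as an increment over the parent cluster root, each increment fitting in $\Ocomp(\log\log_\sigma \ts)$ bits since it is at most $2m-1$; these increments are kept in a weighted-ancestor-sum structure on $\T'$ costing $\Ocomp(\ts\log\log_\sigma \ts/\log_\sigma \ts)$ bits and answering root-path sums in $\Ocomp(1)$, so that \procfont{depth}$(u)$ is the stored depth of $u'$'s cluster root plus the in-cluster depth of $u$.

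\textbf{Level ancestor.} The main obstacle is \procfont{level\_ancestor}$(v,i)$, since the ancestor's depth in $\T$ may fall strictly between two consecutive cluster-root depths along the ancestor path in $\T'$, so a direct level-ancestor on $\T'$ does not suffice. The plan is to equip $\T'$ with a succinct constant-time level-ancestor structure and to augment it with a weighted-predecessor query on the sequence of $\T$-depths of cluster roots along each root-to-node path in $\T'$, in the style of Farzan--Munro~\cite{farzan2014uniform, farzan2009universal}: the depths are sampled at two levels with blocks of size $\Theta((\log\log\ts)^2)$, the inter-block predecessor is resolved by the constant-time level-ancestor on $\T'$ together with a macro-table, and the intra-block predecessor by an in-word micro-search. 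This brings the total extra space to $\Ocomp(\ts(\log\log\ts)^2/\log_\sigma \ts)$ bits and keeps the query constant time. After this step identifies the correct cluster root $w'$, a final in-cluster \procfont{level\_ancestor} lookup from \Tref{4} pinpoints the answer inside the cluster of $w'$. The delicate point I expect is tuning the two-level sampling so that its redundancy matches the claimed $(\log\log\ts)^2$ factor while still composing in $\Ocomp(1)$ with the level-ancestor structure on $\T'$.
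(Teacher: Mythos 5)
Your space analysis, overall case-split strategy, and treatment of \procfont{firstchild}, \procfont{parent}, \procfont{nextsibling}, \procfont{lca}, and \procfont{depth} match the paper's proof: both sum \Trefall with the $|P|H_0(P)$ bound from Theorem~\ref{lem:main_estimation} via P\v{a}tra\c{s}cu's structure, and both resolve each query by precomputed in-cluster tables when everything is local and by reduction to $\T'$ plus a degree-sequence reverse lookup otherwise. Your depth gadget (per-cluster increments, bounded by the cluster size, summed along the root path) is essentially the paper's assignment of weights $\pm w$ to the balanced-parenthesis sequence of $\T'$ together with a partial-sums structure (Lemma~\ref{lem:partial_sums}), so that part is sound.

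There are two places where you fall short. The smaller one is \procfont{childrank}/\procfont{child}: Lemma~\ref{lem:label_seq} only translates a port-node index into a range of children of a cluster node; it does not give cumulative child counts of a node of $\T$ that spans several child clusters. The paper builds, in addition, the concatenated sequence $T$ of per-cluster child counts for each port and a partial-sums structure (Lemma~\ref{lem:partial_sums}) over it, using \procfont{sum}/\procfont{find} in $T_p$ to locate the right child cluster. Your phrase ``combine intra-cluster child counts with the child-range translation'' needs to be expanded into exactly this, and the partial sums account for part of the $\Ocomp(\ts\log\log_\sigma\ts/\log_\sigma\ts)$ redundancy.

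The larger gap is \procfont{level\_ancestor}. You propose a Farzan--Munro two-level sampling with block size $\Theta((\log\log\ts)^2)$ along ancestor paths of $\T'$ and an in-word micro-search inside a block, resolving inter-block predecessor ``by level-ancestor on $\T'$ together with a macro-table.'' This does not come out. A block of $(\log\log\ts)^2$ increments, each of $\Theta(\log\log_\sigma\ts)$ bits, occupies $\Theta((\log\log\ts)^3)$ bits, which exceeds $\Ocomp(1)$ words for a wide range of $\ts$; more importantly, the inter-block step must solve a \emph{weighted} level-ancestor query on a tree with $\Theta(\tps/(\log\log\ts)^2)$ sampled nodes, and no structure meeting the $\Ocomp(\tps(\log\log\ts)^2)$ budget is identified — applying Bender--Farach-Colton there already costs $\Omega(\tps\log^2\ts/(\log\log\ts)^2)$ bits. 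The paper instead re-clusters $\T'$ once more with $m=\Theta(\log^3\ts)$ to form $\T''$, whose weights are $\Ocomp(\log^4\ts)$; Bender--Farach-Colton on $\T''$ then costs $\Ocomp(|\T''|\log^2\ts)=o(\ts)$ bits, and each small cluster is handled by the Navarro--Sadakane fully-functional succinct tree with weighted level-ancestor, whose cost $\Ocomp(t'\log t'\log(t'\log^4\ts))=\Ocomp(t'(\log\log\ts)^2)$ bits is exactly where the $(\log\log\ts)^2$ factor comes from (Lemma~\ref{lem:structure_weighted_level_ancestor}). To close the gap you should either adopt this second clustering with the stated block size and cite the two off-the-shelf structures, or specify concretely what your ``macro-table'' stores and show its space is within budget.
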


\section{Even succincter structure}\label{sec:even_succinter}
So far we obtained the redundancy $\Ocomp(\ts\log \log_\sigma \ts / \log_\sigma \ts) + 
\Ocomp(\ts k\log\sigma /\log_\sigma \ts)$.
%
As the recent lower bound for zeroth-order entropy coding string partition~\cite{entropyBoundsUnpublished}
(assuming certain partition properties)
also applies to trees, we can conclude
that our structure in worst case requires $\Ocomp(\ts k\log\sigma /\log_\sigma \ts)$ additional bits.
Yet, this lower bound only says the above factor is necessary
when zeroth-order entropy coder is used,
not that this factor is required in general.
Indeed, for strings there are methods of compressing
the text $S$ (with fast random access) using $|S|H_k(S) + o(|S|) + f(k, \sigma)$ bits~\cite{GrossiDynamicIndexes, munro2015compressed}
also methods related to compression boosting 
achieve redundancy $|S|H_k(S) + \Ocomp(|S|) + f(k, \sigma) $ bits~\cite{FerraginaCompressionBoosting},
where $f(k, \sigma) $ is some function that depends only on $k$ and $\sigma$.
Similarly, the method of
compressing the tree using combination of \algofont{XBWT}
and compression boosting gives redundancy of $\Ocomp(|S|) + f(k, \sigma)$ bits~\cite{ferragina2005labeledTrees}.
In all of the above cases $f(k, \sigma)$ can be bounded by $\Ocomp(\sigma^k \cdot \text{polylog}(\sigma))$.
This is more desirable than $\Ocomp(\ts k\log\sigma /\log_\sigma \ts)$,
as in many applications $k$ and $\sigma$ are fixed and so this term is constant,
moreover the redundancy is a sum of two functions instead of a product.
Moreover, achieving such redundancy allows us to relax our assumptions,
i.e.\ we obtain additional $o(n)$ factor for $k = \alpha \log_\sigma n, 0 < \alpha < 1$,
while so far our methods only gave $o(n \log \sigma)$ for $k = o(\log_\sigma n)$.

We show how to decrease the redundancy to $\Ocomp(\sigma^k \cdot \text{polylog}(\sigma))$
at the cost of increasing the query time to $\Ocomp(\log n / \log \log n)$
(note that previously mentioned compressed text storages~\cite{GrossiDynamicIndexes} also did not support constant access).
The proof of Theorem~\ref{lem:main_estimation} suggests that we lose up to $k \log \sigma $
bits per cluster
(as a remainder: we assign each cluster value $q$ and we ``pay'' $\log q$ bits),
so it seems like a bottleneck of our solution.
In the case of text compression~\cite{GrossiDynamicIndexes} improvements were obtained by
partitioning the text into blocks and encoding string made of blocks of sizes $\log_\sigma n$ with first order entropy coder,
to support retrieval in time $\Ocomp(d)$, for some $d$, every $d$-th block
was stored explicitly.
For $d$ = $\log |S| / \log \log |S|$ and assuming that each block has at most $\log_\sigma |S|$
characters this gives $\Ocomp(|S| \log \log |S| / \log_\sigma |S|)$ bits of  redundancy.

We would like to generalize this idea to labeled trees,
yet there are two difficulties: first, the previously mentioned
solution for strings needed the property that context of each block is stored wholly in some previous block,
second, as there is no linear order on clusters, we do not know how to choose $\tps/d$ clusters.
Our approach for solving this problem combines ideas from both
compression boosting techniques~\cite{FerraginaCompressionBoosting}
and for compressed text representation~\cite{GrossiDynamicIndexes}:
The idea is that for $\Ocomp(\tps / d)$ nodes, where $d = \log \ts / \log \log \ts$,
we store the context explicitly using $k\log\sigma$ bits.
The selection of nodes is simple:
by counting argument for some $1 \leq i < d$ the tree levels $i, i+d, \ldots i+dj$ of the cluster tree $\mathcal T'$
have at most $\tps / d$ nodes.
This allows to retrieve context of each cluster by traversing
(first up and then down) at most $d$ nodes.
Then we partiton the clusters in the classes depending on their preceding context
and use zeroth order entropy for each class
(similarly to compression boosting~\cite{FerraginaCompressionBoosting}
or some text storage methods~\cite{GonzalezNStatistical}),
i.e.\ we encode each cluster as if we knew its preceding context.
The decoding is done by first retrieving the context
and next decoding zeroth order code from given class.

\begin{theorem}\label{thm:theorem_tree_structure_boosting}
	Let $\mathcal{T}$ be a labeled tree and $k = \alpha \log_\sigma \ts, \alpha < 1$.
	Then we can build the structure using one chosen number of bits from the list below:
	\begin{itemize}\setlength\itemsep{-0.2em}
		\item $\ts H(\mathcal{T}) + \ts H_k(L) +
		\Ocomp\left(
		\frac{\ts \log \log \ts}{\log_\sigma \ts} \right)$;
		\item $\ts H_k(\mathcal{T}|L) + \ts H_k(L)+
		\Ocomp\left(
		\frac{\ts \log \log \ts}{\log_\sigma \ts} \right)$;
		\item $\ts H(\mathcal{T}) + \ts H_k(L|\mathcal{T})+
		\Ocomp\left(
		\frac{\ts \log \sigma}{\log_\sigma \ts}
		+ \frac{\ts \log \log \ts}{\log_\sigma \ts} \right)$.
	\end{itemize}
	This data structure supports the following operations
	in $\Ocomp(\log \ts / \log \log \ts)$ time:
	\procfont{firstchild($u$)}, \procfont{parent($u$)},
	\procfont{nextsibling($u$)}, \procfont{lca($u,v$)},
	\procfont{child($v,i$)}, \procfont{childrank($u$)},
	\procfont{depth($v$)}.
	Using additional $\Ocomp(\ts {(\log \log \ts)}^2 / \log_\sigma \ts)$
	bits it can support \procfont{level\_ancestor($v,i$)}
	query.
\end{theorem}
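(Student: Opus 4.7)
The plan is to combine the clustering $C(\T)$ with a compression-boosting style of encoding, so that the $\Theta(k\log\sigma)$ bits that the analysis of Lemma~\ref{lem:main_estimation} charges for transmitting the context inside \emph{every} cluster description are instead paid explicitly only on a sparse sample of $\Ocomp(\tps/d)$ cluster-tree nodes, where $d=\log\ts/\log\log\ts$. Fix $m=\beta\log_\sigma\ts$; build $C(\T)$ with cluster tree $\T'$ from Lemma~\ref{lem:cluster_structure}, and keep the encodings~\Tref{1} of $\T'$,~\Tref{3} of the degree sequences, and~\Tref{4} of the precomputed cluster tables exactly as in Theorem~\ref{thm:theorem_tree_structure}.

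By pigeonhole some $i\in\{1,\ldots,d\}$ makes the set $M$ of nodes of $\T'$ lying on levels $i,i+d,i+2d,\ldots$ have size at most $\tps/d$. For every $v'\in M$ I would store explicitly the $k$-letter context $K_{v'}$ preceding the roots of the cluster represented by $v'$, consuming $\Ocomp(|M|\cdot k\log\sigma)=\Ocomp(\ts\log\log\ts/\log_\sigma\ts)$ bits since $k\leq\alpha\log_\sigma\ts$. For any other $v'\in\T'$ the context $K_{v'}$ is reconstructed by walking up in $\T'$ to the nearest marked ancestor $u'\in M$ (at most $d$ steps), reading $K_{u'}$, and walking back down to $v'$ updating the current $k$-label window using the cluster look-up tables of~\Tref{4}; the at most $2d$ cluster visits cost $\Ocomp(1)$ each, so context retrieval takes $\Ocomp(d)=\Ocomp(\log\ts/\log\log\ts)$ time.

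I would then replace~\Tref{2} by the following boosted encoding: partition the multiset of clusters by their context $K$ and encode the clusters in each class by a zeroth-order coder with constant-time random access over the (small) alphabet of cluster types appearing in that class, for instance the representation of~\cite{patrascu2008succincter}; the concatenation of these class streams, together with sampled offsets every $d$ preorder positions of $\T'$, forms the new~\Tref{2} and uses $o(\ts\log\log\ts/\log_\sigma\ts)$ bits for its index. Re-running the proof of Lemma~\ref{lem:main_estimation} inside each class with Lemma~\ref{theoremP}, the probability assignment $q$ is the same as before but with the factor encoding the context removed, so the $\Ocomp(\ts k\log\sigma/\log_\sigma\ts)$ term from Lemma~\ref{lem:main_estimation} disappears and only the $\Ocomp(\ts\log m/m)=\Ocomp(\ts\log\log_\sigma\ts/\log_\sigma\ts)$ rounding overhead survives, giving the desired bounds in the first two bullets. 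For the third bullet, the extra $\Ocomp(\ts\log\sigma/\log_\sigma\ts)$ summand comes exactly from the ``$+1$'' in the coefficient of $k$ in case~(\ref{main_thm:case3}) of Lemma~\ref{lem:main_estimation}, i.e.\ from folding the degree into the context.

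Navigational queries are reduced as in Theorem~\ref{thm:theorem_tree_structure} to $\Ocomp(1)$ queries on $\T'$ plus the inspection of $\Ocomp(1)$ clusters; the only new cost is that each cluster inspection requires first reconstructing its context and then advancing through at most $d$ neighbouring preorder codes of the same-class stream starting from the nearest sampled offset, giving $\Ocomp(\log\ts/\log\log\ts)$ per query, and \procfont{level\_ancestor} is supported by the same $\Ocomp(\ts(\log\log\ts)^2/\log_\sigma\ts)$-bit auxiliary structure as in Theorem~\ref{thm:theorem_tree_structure}. The main obstacle I anticipate is arguing that the context can be updated in $\Ocomp(1)$ per step during the downward walk (so that the overall retrieval stays $\Ocomp(d)$ rather than $\Ocomp(d\cdot k)$), and bookkeeping the class offsets and the short-context boundary cases so that the per-class application of Lemma~\ref{theoremP} yields a valid sub-probability distribution and the zeroth-order overhead incurred in each class sums to the promised $\Ocomp(\ts\log\log\ts/\log_\sigma\ts)$.
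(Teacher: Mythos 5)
Your high-level scheme is the same as the paper's: choose the depth-residue class with $\leq\tps/d$ nodes ($d=\log\ts/\log\log\ts$), store those contexts verbatim at cost $\Ocomp(\tps k\log\sigma/d)=\Ocomp(\ts\log\log\ts/\log_\sigma\ts)$, reconstruct any other context by an $\Ocomp(d)$-step walk in $\T'$, and replace~\Tref{2} by a per-context zeroth-order encoding whose analysis drops the $q(K_C)$ factor from the interval argument of Lemma~\ref{lem:main_estimation} (this is exactly the paper's Lemma~\ref{lemma:main_estimation_boosting}). The first and third bullets fall out the same way you describe.

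However, your concrete realization of the boosted~\Tref{2} has a genuine gap. You concatenate the \emph{class streams} (one stream per context $K$) and propose to reach $P[i]$ from ``the nearest sampled offset'' and then ``advance through at most $d$ neighbouring preorder codes of the same-class stream.'' Adjacent preorder positions of $\T'$ almost never share a context, so the $d$ positions between the sample and $i$ are scattered over up to $\sigma^k$ different class streams; to locate $P[i]$ inside its own class stream you would need the rank of $i$ among preorder positions with the same context, which is a nontrivial 2D-rank query, not a $d$-step scan, and you have not provided a structure for it. The paper avoids this entirely in Lemma~\ref{lem:huffman_encode_boosting}: it keeps the codes in \emph{preorder} order of $\T'$, gives each context class its own Huffman codebook, marks codeword boundaries with a bitmap, and then $P[i]$ is found by one select on the bitmap and decoded with the codebook selected by the (already reconstructed) context --- constant time, no per-class offsets. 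You also omit the cost of storing the per-class codebooks/tables; in the paper this is the $\Ocomp(\sigma^{k+m}2^m\log\ts)$ term of Lemma~\ref{lem:huffman_encode_boosting}, and it is precisely what forces the constraint $2\beta+\alpha<1$ on $m=\beta\log_\sigma\ts$; with $\sigma^k$ classes and one Patrascu-style structure per class you would incur an analogous term and need the same tuning, which you never derive. Finally, the issue you flag as ``the main obstacle'' (updating the context in $\Ocomp(1)$ per downward step) is resolvable and is implicitly handled in the paper: since $k\log\sigma=\alpha\log\ts<\log\ts$, a $k$-label context fits in one machine word, a cluster description fits in one machine word, and~\Tref{4} can precompute the map $(\text{old context},\text{cluster},\text{port index})\mapsto\text{new context}$ in $o(\ts)$ bits, so the update is one table lookup; you should say this rather than leave it open.
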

\section{Label-related operations}\label{sec:label_operations}
We show that using additional memory we can support
some label-related operations
previously considered for succinct trees~\cite{tsur2015succinct,he2014framework}.
Even though we do not support all of them, in all cases we support at least the same operations
as \algofont{XBWT}, i.e.\
\procfont{childrank($v, a$)} (which returns $v$'s rank among children labeled with $a$)
and \procfont{childselect($v, i, a$)} (which returns the $i$th child of $v$ labeled with $a$).
To this end we employ \procfont{rank/select} structures for large alphabets~\cite{BarbayRankSelect,OptimalLowerUpper}.
Note, that we do not have constant time for every alphabet size
and required additional space is larger than for other operations (i.e.\ $o(n\log \sigma)$),
but this is unavoidable~\cite{OptimalLowerUpper, GrossiOptimalTradeoffIndexes}.
Moreover, as the last point of Theorem~\ref{thm:label_operations}
use structures which are not state-of-the-art,
it is likely that better structures~\cite{OptimalLowerUpper} are applicable,
but they are more involved and it is not clear if
they are compatible with our tree storage methods.

\begin{theorem}
	\label{thm:label_operations}
	We can augment structures from Theorem~\ref{thm:theorem_tree_structure} and
	Theorem~\ref{thm:theorem_tree_structure_boosting} so that:
	\begin{itemize}\setlength\itemsep{-0.2em}
		\item for $\sigma = \Ocomp(1)$ we can perform:
		\procfont{childrank($v, a$)}, \procfont{childselect($v, i,a$)},
		\procfont{level\_ancestor($v, i, a$)}, \procfont{depth($v, a$)}
		for structure from Theorem~\ref{thm:theorem_tree_structure}
		in $\Ocomp(1)$ time and
		for structure from Theorem~\ref{thm:theorem_tree_structure_boosting}
		in $\Ocomp(\log \ts / \log \log \ts)$ time
		using asymptotically the same additional memory.
		\item for $\sigma = \Ocomp(\log^{1+o(1)}) \ts$ and $\sigma = \omega(1)$ we can perform:
		\procfont{childrank($v, a$)} and \procfont{childselect($v, i, a$)}
		for structure from Theorem~\ref{thm:theorem_tree_structure}
		in $\Ocomp(1)$ time and
		for structure from Theorem~\ref{thm:theorem_tree_structure_boosting}
		in $\Ocomp(\log \ts / \log \log \ts)$ time
		using additional $o(\ts \log \sigma)$ bits.
		\item for arbitrary $\sigma$ we can perform:
		\procfont{childrank($v, a$)} and \procfont{childselect($v, i, a$)}
		for structure from Theorem~\ref{thm:theorem_tree_structure}
		in $\Ocomp( \log \log^{1+\epsilon} \sigma )$ time and
		for structure from Theorem~\ref{thm:theorem_tree_structure_boosting}
		in $\Ocomp((\log \log^{1+\epsilon} \sigma) \log \ts / \log \log \ts) $ time;
		using additional $o(\ts \log \sigma)$ bits.
	\end{itemize}
\end{theorem}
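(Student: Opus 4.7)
The plan is to reduce each label-constrained query to a combination of a lookup inside a single cluster, resolved by the precomputed tables~\Tref{4}, and a global \procfont{rank}/\procfont{select} query on an auxiliary string built over the cluster tree $\mathcal{T'}$. The key auxiliary string is defined as follows: enumerate the non-root nodes of $\mathcal{T'}$ in a child-contiguous order (so that children of every $u'\in\mathcal{T'}$ occupy a contiguous range of positions), and let $R$ list the label in $\Sigma$ of the root of the cluster represented by each such node. The alphabet of $R$ has size at most $\sigma$ and $|R|=\Ocomp(\ts/\log_\sigma \ts)$. Equipping $R$ with a \procfont{rank}/\procfont{select} structure suited to the alphabet regime yields the three rows of the theorem: a simple bit-vector-based solution for $\sigma=\Ocomp(1)$; a standard compressed structure for $\sigma=\Ocomp(\log^{1+o(1)}\ts)$, fitting in $o(\ts\log\sigma)$ additional bits with constant-time queries; and the structure of~\cite{BarbayRankSelect} for arbitrary $\sigma$, giving $\Ocomp(\log\log^{1+\epsilon}\sigma)$ query time in $o(\ts\log\sigma)$ bits.

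Next I will handle \procfont{childrank($v,a$)} and \procfont{childselect($v,i,a$)} uniformly. Locate the cluster node $v'\in\mathcal{T'}$ that contains $v$ and the in-cluster position of $v$; under Theorem~\ref{thm:theorem_tree_structure} this costs $\Ocomp(1)$, under Theorem~\ref{thm:theorem_tree_structure_boosting} it costs $\Ocomp(\log\ts/\log\log\ts)$. If $v$ is a regular node, then \Cref{4} guarantees that all children of $v$ lie in the same cluster, so the query is settled by a table lookup on the cluster type with arguments $(v,a)$. If $v$ is a port node, Lemma~\ref{lem:label_seq} yields the contiguous range $[i_1,i_2-1]$ of children of $v'$ in $\mathcal{T'}$ whose cluster roots are the children of $v$ in $\T$; translating this range into a range $[j_1,j_2-1]$ of positions in $R$ and issuing a single \procfont{rank} or \procfont{select} call on $R$ finishes the query. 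The total time is the cluster-navigation cost plus one call to the rank/select structure on $R$, matching the claimed bounds in each of the three regimes.

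For the constant-$\sigma$ case I further support \procfont{level\_ancestor($v,i,a$)} and \procfont{depth($v,a$)}. For every $a\in\Sigma$ I attach to $\mathcal{T'}$, at every node $u'$, a counter of the number of proper ancestors of $u'$ whose cluster root is labeled $a$, with the counters sampled every $\Theta(\log\ts)$ preorder positions and the missing values reconstructed on the fly using $R$ and \procfont{rank} along a short segment; and for every cluster type I precompute a table that returns, for each in-cluster node and each label, the number and the in-cluster descriptions of $a$-labeled ancestors. Then \procfont{depth($v,a$)} is the sum of the global counter at $v'$ and the in-cluster count at $v$. For \procfont{level\_ancestor($v,i,a$)} I first check whether the target is inside $v$'s cluster using the per-cluster table; otherwise I locate the unique ancestor cluster containing the target via \procfont{level\_ancestor} on $\mathcal{T'}$ combined with the per-label counters and finish inside the cluster with another table lookup. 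Because $\sigma$ is constant, the total extra space is asymptotically the same as the base-structure redundancy.

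The main obstacle is the book-keeping of the range translation in the childrank/childselect step: the mapping from the children of $v'$ in $\mathcal{T'}$ to positions of $R$ must match the child-contiguous order used when building $R$, which I maintain by a small prefix-sum structure over the degree sequence of $\mathcal{T'}$ already stored as part of \Tref{3}. Under Theorem~\ref{thm:theorem_tree_structure_boosting} the cluster-tree navigation itself costs $\Ocomp(\log\ts/\log\log\ts)$, so every reduction above must use only $\Ocomp(1)$ additional calls on $\mathcal{T'}$ in order to preserve the stated query bounds; and for the constant-$\sigma$ augmentation, the sampling of the global per-label counters is essential so that they do not dominate the redundancy term.
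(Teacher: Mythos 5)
Your reduction for \procfont{childrank} and \procfont{childselect} has a genuine gap in the definition of the auxiliary string $R$. You build $R$ with one symbol per node of $\mathcal{T'}$, recording ``the label of the root of the cluster''; but by condition~\Cref{3} a cluster is a \emph{forest}, and its several tree-roots can carry different labels in $\Sigma$. When $v$ is a port node, its children in $\T$ are not the clusters hanging below $v'$, but \emph{all} the tree-roots inside those clusters. With one symbol per cluster, a single \procfont{rank}/\procfont{select} over the child-range of $v'$ in $R$ cannot see, count, or select among the differently labeled roots within a single cluster, so your reduction returns wrong answers on any multi-root cluster. The paper avoids this by defining the auxiliary string $S$ to concatenate the labels of \emph{every} tree-root, grouped so that each cluster forms a contiguous block; hence $|S|$ can be $\Theta(\ts)$, not $\Ocomp(\tps)$ as in your $R$. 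The blocked rank/select machinery of Belazzougui et al.\ (for $\sigma=\Ocomp(\log^{1+o(1)}\ts)$) and Barbay et al.\ (for arbitrary $\sigma$) is then applied with clusters as blocks and precomputed per-cluster tables for the in-block queries; the degree-sequence structure from Lemma~\ref{lem:label_seq} (as you correctly use) locates the block boundaries. Your plan can be repaired by replacing $R$ with this longer, block-structured $S$.

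Your constant-$\sigma$ branch has a related defect: you count ancestors of $u'$ ``whose cluster root is labeled $a$,'' which is not \procfont{depth($v,a$)} -- the path from $v$ to the root passes through interior nodes of every ancestor cluster, not just their roots, and those too must be counted when labeled $a$. The paper instead re-uses the already-built weighted machinery (degree sequence, depth, level-ancestor) one copy per letter with $0/1$ node weights, which handles all intermediate nodes uniformly and costs only a constant blow-up.
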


\section{Conclusions and future work}
We have shown a simple structure allowing
fast operations on the compressed form.
In some cases it is better than previous results,
moreover, it is the first compression
method which the exploits correlation
between tree structure and labels
and achieves theoretical bounds.

The open problems are in the very first section of the Appendix.

\newpage
\appendix
\begin{center}
\Huge Appendix
\end{center}

\section{Open problems}

There are a few open questions.
First, can our analysis be applied to
recently developed dictionary compression methods
for trees like Top-Trees/Top-Dags
~\cite{bille2013TopTrees,hubschle2015tree}
or other dictionary based methods on trees?
Related is the problem of finding good
compression measures for repetitive trees,
for instance for the case of text we have
\algofont{LZ77} and \algofont{BWT}-run,
for which we can build efficient structures
(like text indicies)
based on this representations~\cite{selfIndexLz77, OptimalBWTIndexing}
\emph{and} find relation with information-theoretic bounds (like $k$-th order entropy)
(or even show that they are close in information-theoretic sense to each other~\cite{StringAttractorsSTOC}).
Even though we have tree representation like \algofont{LZ77} for trees~\cite{treeLZ}
or tree grammars~\cite{treegrammar} we do not know relation
between them nor how they correspond to tree entropy measures.
One could also measure tree repetitiveness with number of  runs
(i.e.\ number of phrases in run-length encoding)
in string generated by $\algofont{XBWT}$, yet, as mentioned before,
it does not capture tree shape and to our knowledge
such measure was not considered before;
so there one problem is improving such measure,
the other is to construct data structure achieving such quantity.

Second, we do not support all of the labeled operations,
moreover we do not achieve optimal query times.
The main challenge is to support more complex operations,
like labeled level\_ancestor, while achieving theoretical
bounds considered in this work.
Previous approaches partitioned (in rather complex way) the tree into
subtrees, by node labels (i.e.\ one subtree contained only nodes with same label)
and achieved at most
zeroth-order entropy of labels~\cite{tsur2015succinct,he2014framework},
it maybe possible to combine these methods with ours.

Next, it should be possible that the presented structure
can be made to support dynamic trees,
as all of the used structures have their dynamic
equivalent~\cite{GrossiDynamicIndexes,succintPartialSums,fullyFunctionalTrees}.
Still, it is not entirely trivial as we need to
maintain clustering.

Next, as Ferragina et al.~\cite{ferragina2005labeledTrees}
mentioned in some applications nodes can store strings,
rather than single labels,
where the context for a letter is defined by labels of ancestor
nodes and previous letters in a node.
It seems that our method should apply in this scenario,
contrary to \algofont{XBWT}.

At last,
can the additional space required for \procfont{level\_ancestor}
query can be lowered?
We believe that this is the case,
as our solution did not use the fact
that all weights sum up to n;
moreover it should be possible to apply methods
from from~\cite{fullyFunctionalTrees}
to obtain $\Ocomp(\ts \log \log \ts / \log_\sigma \ts)$
additional space for \procfont{level\_ancestor}.

\section{Additional proofs for Section~\ref{sec:definitions}}
\label{sec:definitions_appendix}
\begin{proof}[Proof of Lemma~\ref{lem:entropies_not_larger}]
The proof follows by straightforward application of the log sum inequality:
\begin{align*}
 \ts H_k(\mathcal{T}|L) &=
 -\sum_{v \in \mathcal{T}} \log \prob (d_v | K_{v}, l_v)\\
 &= -\sum_{v \in \mathcal{T}} \log \frac{t_{K_v, l_v, d_v}}{t_{K, l_v}}\\
 &= -\sum_{d} 
	\sum_{K}
	\sum_{l}
 t_{K, l, d}\log \frac{t_{K, l, d}}{t_{K, d}}\\
 &\leq -\sum_{d}  t_d \log \frac{t_d}{\ts}
 \end{align*}
The proof for the case $\ts H_k(L|\mathcal{T})$ is similar.
\end{proof}

\section{Additional proofs for Section~\ref{sec:tree_clustsering}}

\begin{proof}[Proof of Lemma~\ref{lem:cluster_structure}]
	Given a~tree $\T$ we cluster its nodes,
	and create a node for each cluster
	and add an edge between two nodes $u$ and $v$
	if and only if in $\mathcal{T}$ there was an edge from some vertex
	from cluster $C_u$ to some vertex in $C_v$.
	We label the cluster nodes consistently, i.e.\
	$u$ and $v$ get the same label if and only if their clusters
	are identical,
	also in the sense which nodes are port nodes
	(note that the port nodes can have different degree in the input tree).
	For each cluster label we store its cluster.
	For simplicity we assume that assigned labels are 
	from the ordered set (i.e.\ set of numbers).
	
	As mentioned before we store the previously defined degree sequence.
\end{proof} 

\begin{proof}[Proof of Lemma~\ref{lem:clustering_procedure}]
	We build the clustering by a simple dfs-based method,
	starting at the root $r$.

	For a node $v$, the procedure returns a tree $c_v$ rooted in $v$
	along with its size (and also creates some clusters).
	The actions on $v$ are as follows:
	we recursively call the procedure on $v$'s children \seq v 1 j,
	let the returned trees be \seq c 1 j and their sizes $\seq s 1 j$.
	We have two possibilities:
	\begin{itemize}\setlength\itemsep{-0.2em}
		\item $\sum_{i=1}^j s_i < m$, then return a tree rooted at $v$ 
		with all of the returned trees \seq c 1 j rooted at its children.
		\item $\sum_{i=1}^j s_i \geq m$,
		then we group returned trees greedily:
		We process trees from left to right, at the beginning we create cluster containing $C=\{c_1\}$,
		while $|C| < m$ we add consecutive $c_i$'s to $C$
		(recall that $|C|$ denotes the number of nodes in trees in $C$).
		Then at some point we must add $c_j$
		such that $|C| \geq m$.
		We output the current cluster $C$, set $C= \{ c_{j+1}\}$ and continue grouping the trees.
		At the end we return tree containing just one vertex: $v$.
	\end{itemize}
	FinallyNote that we make a cluster of the tree returned by the root,
	regardless of its size.

	First observe that indeed this procedure always returns a tree,
	its size is at most $m$: either it is only a root or in the first case the sum of sizes of subtrees is at most $m-1$, plus $1$ for the root.
	
	This implies \Cref{2}: when we make a cluster out of $\seq c i {j+1}$ then $\sum_{\ell=i}^j s_\ell$ is at most $m-1$ by the algorithm
	and $s_{j+1} \leq m$ by the earlier observation.
	
	It is easy to see from the algorithm that a cluster contains trees rooted at the consecutive siblings, so \Cref{3} holds.
	
	By an easy induction we can also show that in each returned tree
	the degree of the node is $0$ (for the root of the tree in the second case)
	or equal to the degree in the input tree (the root in the first case),
	thus \Cref{4} holds.
	
	Concerning the total number of clusters,
	first recall that they are of size at most $2m-1$,
	thus there are at most $\frac{\ts}{2m-1} \geq \frac{\ts}{2m}-1$ clusters.
	To upper bound the number of clusters observe
	that by the construction the clusters are of size at least $m$ except two cases:
	the cluster rooted at the root of the whole tree
	and the clusters that include the last child of the root (but not the root, i.e.\ they are created in the second case).
	For the former, there is at most one such a cluster.
	For the latter, in this case we also created at least one other cluster from trees rooted in siblings in this case,
	its size is at least $m$.
	Thus there are at most $2\frac{\ts}{m}+1$ clusters, as claimed in \Cref{1}.
	%
	%
	%
	%
	%
	%
	%
\end{proof}

\begin{proof}[Full version of proof of Theorem~\ref{lem:main_estimation}]
	Let $P$ be a sequence of labels over alphabet $\Gamma_{L}$.
	Let $R$ be a sequence of description of clusters of $\mathcal{T'}$
	in preorder ordering, so that description $R[i]$ of cluster $C$
	corresponds to label $P[i]$. 
	Observe that, $H_0(P) \leq H_0(R)$,
	in particular $\en{P} \leq \en{R}$:
	To see this, observe that each label of $P$ may
	correspond to many descriptions from $R$,
	but not the other way around:
	if some nodes have different descriptions,
	then they have different labels in $\mathcal{T'}$.

	Thus it is enough to upper-bound \en{R}.
	This is done by applying Lemma~\ref{theoremP}
	for appropriately defined function $q$;
	different estimations require different variants of $q$.
	The function $q$ is defined on each $R[i]$.

	We begin with a proof for Case~\ref{main_thm:case2}.
	Let $R_{K,C} = (K, N_C, V_C)$ be a description of the cluster $C$.
	We define $q(R_{K,C}) = q(K_C)\cdot q(N_C) \cdot q(V_C)$,
	where $q$ for each coordinate is defined as follows:
	\begin{itemize}\setlength\itemsep{0.1em}
		\item $q(K_C) = 1/\left((|K_C|+1)
		\cdot \sigma^{|K_C|}\right)$
		\item $q(N_C) = 1/(2m)$
		\item $q(V_C) = \prod_{v \in V_C} q(v)$,
		where 
		\begin{equation*}
		q(v) = \begin{cases}
		\frac{1}{m}\cdot
		\prob(l_v | K_v)
		&\text{, if } v = (1, l_v)	\\
		\frac{m-1}{m} \cdot
		\prob(l_v | K_v) \cdot 
		\prob(d_v | K_v, l_v)
		&\text{, if } v = (0, l_v, d_v)
		\end{cases}
		\end{equation*}
	\end{itemize}	
	Note that $K_v$ is the context of $v$ in original tree, i.e.\ $\mathcal{T}$.
	
	It is left to show that the $q$ summed over all cluster descriptions is at most $1$.
	To this end we will show that we can partition the interval $[0,1]$ into subintervals
	and assign each element of $\Gamma_{L}$
	subinterval of length $q(R_C)$,
	such that for two symbols of $\Gamma_L$
	their intervals are pairwise disjoint.
	It is analogous to applying adaptive arithmetic
	coder.
	We start with interval $I = [0, 1]$.
	We process description of cluster by coordinates,
	at each step we partition $I$ into disjoint
	subintervals and choose one as the new $I$:
	\begin{itemize}\setlength\itemsep{0.1em}
		\item For $K_C$ we partition interval into $|k|+1$
		equal subintervals, each one corresponding to
		different context length,
		then we choose one corresponding to $|K_C|$.
		Then we partition the $I$ into
		$\sigma^{|K_C|}$ disjoint and equal subintervals,
		one for each different context of length $|K_C|$
		and choose one corresponding to $K_C$.
		Clearly the length of the current $I$ at this point
		is $1/\left((|K_C|+1)\cdot \sigma^{|K_C|}\right)$,
		also different contexts are assigned different intervals.
		\item For $N_C$ we partition the $I$ to $2m$ equal intervals,
		it is enough, as there are at most $2m$ different cluster sizes.
		\item Then we make a partition for each $u \in V_C$.
		We process vertices of $V_C$ as in they occur on this list
		(i.e.in preorder ordering).
		We partition the interval into two, one of length
		$\frac{|I|}{m}$, second $\frac{|I|(m-1)}{m}$.
		If $u = (1, l_v)$ we choose the first one,
		otherwise we choose the second one.
		Then we partition the $|I|$ into
		$\sigma$ different intervals (but some may be of $0$ length),
		one for each letter $a$ of original alphabet $\Sigma$;
		the subinterval for letter $a$ has length $|I| \cdot \prob(a | K_C)$.
		It is proper partition, as all of the above
		values sum up to $1$ by definition.
		We choose the one corresponding to the letter $l_v$.
		Lastly, if $i = (0, l_v, d_v)$ then
		we partition the interval into intervals
		corresponding to different degrees of nodes,
		again of lengths $|I|\cdot\prob(d_v | K_C, l_v)$.
	\end{itemize}
	
	By construction the interval assigned to $R_C$ has length $q(R_C)$.
	Also, for two different clusters $C_1, C_2$, having different preceding
	contexts $K_{C_1}, K_{C_2}$,
	their intervals are disjoint.
	To see this consider the above procedure which assigns intervals
	and the
	first point where descriptions $R_{K_{C_1}, C_1}$,
	$R_{K_{C2}, C_2}$ differ (there must be such point by Lemma~\ref{lem:cluster_uniqely}).
	Observe that up to this point both clusters were assigned the same intervals.
	What is more already processed elements of $R_{K, C}$
	uniquely define context (or label) for current vertex,
	i.e.\ when we want to partition by $\prob(l_v | K_v)$,
	the all nodes on the path to $v$ were already processed
	(this is due to the preorder ordering
	of nodes in description).
	This guarantees that, if the descriptions
	for two clusters $C_1, C_2$
	were equal up to this point,
	then the interval will be partitioned in the same way for $C_1$ and $C_2$.
	At this point $R_{C_1}$, $R_{C_2}$ will be assigned different,
	disjoint intervals. 
	
	Now we are ready to apply Lemma~\ref{theoremP}.
	By $C_{v}$ and $K_{v}$ we denote the 
	cluster of represented by $v$ and context of 
	this cluster in $\mathcal{T}$, respectively;
	additionally let $n=\ts$.
	\begin{align*}
	|P|H_0(P) &\leq
	|R|H_0(R) \\&\leq
	-\sum_{v \in \mathcal{T'}} \log q(R_{K_{v}, C_{v}})\\
	&= -\sum_{v \in \mathcal{T'}}
	\log \left(q(K_v) \cdot q(N_{C_v}) \cdot q(V_{C_v}) \right) \\	
	&\leq-\sum_{v \in \mathcal{T'}}
	\log \left(\frac{1}{(k+1)\sigma^k} \cdot \frac{1}{m} \cdot q(V_{C_v}) \right)\\
	&=|\mathcal{T'}|\left(k\log\sigma + \log (k+1) + \log m \right)
	-\sum_{v \in \mathcal{T'}}
	\log \left(q(V_{C_v}) \right)\\
	&\leq -\sum_{v \in \mathcal{T'}}\log \left(q(V_{C_v}) \right) +
	\Ocomp\left(\frac{n k \log \sigma }{m} + \frac{n\log m}{m} \right) \enspace .
	\end{align*}
	Now:
	\begin{align*}
	-\sum_{v \in \mathcal{T'}}\log q(V_{C_v}) &= 
	-\sum_{u \in \mathcal{T}}\log q(u)\\
	&=-\sum_{\substack{u \in \mathcal{T}\\u: \text{port}} }\log\left(\frac{1}{m} \cdot \prob(l_v | K_v) \right)
	-\sum_{\substack{u \in \mathcal{T}\\u: \text{normal}}}
	\log
	\left(
	\frac{m-1}{m} \cdot
	\prob(l_v | K_v) \cdot 
	\prob(d_v | K_v, l_v) \right)\\
	&\leq nH_k(L) + nH_k(\mathcal{T|L}) + \Ocomp(|\mathcal{T'}|\log m) + n \log \frac{m}{m-1}\\
	&\leq 
	nH_k(L) + nH_k(\mathcal{T|L}) + \Ocomp\left(\frac{n \log m}{m}\right)+
	\frac{n}{m-1}\log \left(1 + \frac{1}{m-1} \right)^{m-1}\\
	&\leq
	nH_k(L) + nH_k(\mathcal{T|L}) + \Ocomp\left(\frac{n \log m}{m}\right)+
	\Ocomp\left(\frac{n}{m} \right) \enspace ,
	\end{align*}
	which ends the proof for the Case~\ref{main_thm:case2}.
	In the estimation we have used the fact that the total number of port nodes
	is at most $\Ocomp(n/m)$, since it cannot exceed the number of clusters.
	
	The proof of Case~\ref{main_thm:case1} can be carried out in a similar manner,
	by replacing $\prob(d_v | K_v, l_v)$ with $\prob(d_v | K_v)$;
	alternatively it follows from Lemma~\ref{lem:entropies_not_larger}.
	
	The Case~\ref{main_thm:case3} requires slight modification of assignment
	of $q$ to vertices:
	\begin{equation*}
	q(v) = \begin{cases}
	\frac{1}{m}\cdot
	\frac{1}{\sigma}
	&\text{, if } v = (1, l_v)	\\
	\frac{m-1}{m} \cdot
	\prob(d_v | K_v) \cdot 
	\prob(l_v | K_v, d_v)
	&\text{, if } v = (0, l_v, d_v)
	\end{cases} \enspace .
	\end{equation*}
	This is because we do not have
	information on original degree of $v$,
	so we cannot partition the interval by
	$\prob(l_v | K_v, d_v)$ for port nodes:
	we want to keep the invariant that when
	partitioning the interval for $\prob(l_v | K_v, d_v)$
	previous elements of $R_{K, C}$ uniquely determins $K_v, d_v$,
	which may not be true.
	This adds additional $\Ocomp\left(\frac{n \log \sigma }{m}\right)$
	summand (since there are at most $\Ocomp(n/m)$ port nodes), hence we have
	$\Ocomp\left(\frac{n (k+1) \log \sigma}{m}\right)$
	in the third case.
	
	Also the argument with assigning intervals must be slightly changed:
	we first partition the interval by $\prob(d_v | K_v)$,
	then by $\prob(l_v | K_v, d_v)$.
\end{proof}

\section{Additional proofs for Section~\ref{sec:entropy_estimation}}
\begin{proof} [Proof of Lemma~\ref{lem:cluster_uniqely}]
	It is a known fact that from sequence of degrees
	in preorder ordering
	we can retrieve shape of the tree (we can do it by 
	simple dfs-procedure, which first creates node with given degree,
	then calls itself recursively; when we recurse back we know
	which node is next etc.).
	From cluster description we can easily retrieve the
	sequence of its degrees in preorder ordering.
	We also can easily retrieve labels and information which nodes are port.
	Now for each vertex $v$ in cluster we know its original context $K_v$ in $T$,
	as we explicitly store context for roots of trees,
	and other nodes have their context either fully in cluster,
	or their context is concatenation of some suffix of $K$
	and some path in the cluster.
\end{proof}

\section{Additional material for Section~\ref{sec:application_compressing}}

\begin{proof}[Proof of Lemma~\ref{lem:label_seq}]
	First we concatenate degree sequences
	for each node in $\mathcal{T'}$,
	according to preorder ordering,
	obtaining a sequence
	$D = d_{{v_1},1}, \ldots d_{{v_1},j_1}, d_{{v_2},1}, \ldots d_{{v-2},j_2},
	d_{{v_\mathcal{|T'|}},1}, \ldots d_{{v_\mathcal{|T'|}},j_\mathcal{|T'|}}$.
	Sum of all $d_{v,j}$'s in the sequence is bounded by $|\mathcal{T'}|$,
	as each $d_{v,j}$ corresponds to $d_{v,j}$ edges.
	We now encode each number in the sequence in unary:
	$D_u= 0^{d_{{v_1},1}}1 \ldots 0^{d_{{v_\mathcal{|T'|}},j_\mathcal{|T'|}}}1$
	
	Then we build a separate sequence $B_u$,
	which marks the borders between nodes in the degree sequence,
	i.e $B_u[z] = 1$ if and only if at index $z$
	starts the unary degree sequence of some node.
	
	Consider the following example: for nodes $a,b,c,d,e$
	and corresponding degree sequences $(0), (3,1), (2), (1,2), (2,2)$
	we have
	(the vertical lines $|$ denote borders of degree sequences
	and are added for increased readability):
	\begin{align*}
	D_u &= 1|000101|001|01001|001001 \\
	B_u &= 1|100000|100|10000|100000
	\end{align*}
	We now construct rank/select data structure for $D_u$ and $B_u$.
	There are multiple approaches that, for static bitvectors,
	use $\Ocomp(|D_u| + |B_u|) = \Ocomp(|\mathcal{T'}|)$ bits
	and allow both operations in $\Ocomp(1)$
	time~\cite{succintDictionariesWithTrees}.

	We describe how to answer a query for node $u \in \mathcal{T'}$
	and port node $v$ in the cluster.
	Let $p_u$ be preorder index of $u$.
	
	Let $j$ be the point in $D_u$ where the degree sequence of $u$
	starts, i.e.\ $j = select_1(p_u, B_u)$.
	Let $p_v$ be a port index of $v$ in the $u$-cluster
	(recall that we ordered port nodes in each cluster in left-to-right order on leaves).
	We want to find the beginning of unary description of $d_{u, p_v}$
	(plus one) in $D_u$:
	this is the $p_v-1$-th $1$ starting from $j$-th element in $D_u$.
	The next $1$ corresponds to the end of unary description.
	Let $u_1, u_2$ be the beginning and end of this unary description,
	we can find them in the following way:
	$u_1 = select_1(rank_1(j, D_u) + p_v - 1, B_u) + 1$
	and  $u_2 = select_1(rank_1(j, D_u) + p_v, B_u) -1$.
	Observe now that number of $0$'s in $D_u$
	between $j$ and $u_1$ (with $j$ and $u_1$)
	is equal to $i_1$.
	Similarly we can get $i_2$ by counting zeroes between $u_1$ and $u_2$.
	Thus $i_1 = rank_0(u_1, D_u) - rank_0(j)$ and $i_2 = rank_0(u_2, D_u) - rank_0(u_1, D_u)-1$.
	
	We now proceed to the second query.
	We find the beginning of description in unary,
	denoted $j$ as above.
	We find position $u_x$ of $x$-th $0$
	counting from $j$,
	we do it by calling $u_x =  select_0(rank_0(j) + x)$.
	Now we calculate the numbers of $1$'s between $j$ and $u_x$
	and simply return this value ($+1$).
	That is, we return $rank_1(u_x) - rank_1(j) + 1$.
\end{proof}

\begin{proof}[Proof of Lemma~\ref{ref:lem_size_clusters}]
	Each cluster can be represented by:
	a number of nodes in cluster, written as a~unary string of length $m'+1$,
	a bitvector indicating which nodes are port nodes of length $m'$,
	balanced parentheses representation of cluster structure,
	string of labels of length $m'$.
\end{proof}

We give more general version of Theorem~\ref{thm:theorem_tree_structure}.
\begin{theorem}[Full version of Theorem ~\ref{thm:theorem_tree_structure}]
	\label{thm:theorem_tree_structure_full}
	Let $\mathcal{T}$ be a labeled tree with labels
	from an alphabet of size $\sigma \leq \ts^{1-\alpha}, \alpha > 0$.
	Then we can build the tree structure
	using one chosen number of bits from the list below:
	\begin{itemize}\setlength\itemsep{0.1em}
		\item $\ts H(\mathcal{T}) + \ts H_k(L) +
		\Ocomp\left(\frac{\ts k\log \sigma}{\log_\sigma \ts}
		+ \frac{\ts \log \log_\sigma \ts}{\log_\sigma \ts} \right)$;
		\item $\ts H_k(\mathcal{T}|L) + \ts H_k(L)+
		\Ocomp\left(\frac{\ts k \log \sigma}{\log_\sigma \ts}
		+ \frac{\ts \log \log_\sigma \ts}{\log_\sigma \ts} \right)$;
		\item $\ts H(\mathcal{T}) + \ts H_k(L|\mathcal{T})+
		\Ocomp\left(\frac{\ts (k+1) \log \sigma}{\log_\sigma \ts}
		+ \frac{\ts \log \log_\sigma \ts}{\log_\sigma \ts} \right)$.
	\end{itemize}
	
	For general $\sigma$ we can build the structure that consumes:
	\begin{itemize}\setlength\itemsep{0.1em}
		\item $\ts H(\mathcal{T}) + \ts H_k(L) +
		\Ocomp\left(\frac{\ts k\log \sigma}{\log_\sigma \ts}
		+ \frac{\ts \log \log \ts}{\log_\sigma \ts} \right)$;
		\item $\ts H_k(\mathcal{T}|L) + \ts H_k(L)+
		\Ocomp\left(\frac{\ts k \log \sigma}{\log_\sigma \ts}
		+ \frac{\ts \log \log \ts}{\log_\sigma \ts} \right)$;
		\item $\ts H(\mathcal{T}) + \ts H_k(L|\mathcal{T})+
		\Ocomp\left(\frac{\ts (k+1) \log \sigma}{\log_\sigma \ts}
		+ \frac{\ts \log \log \ts}{\log_\sigma \ts} \right)$.
	\end{itemize}	

	It supports 
	\procfont{firstchild(u)}, \procfont{parent(u)},
	\procfont{nextsibling(u)}, \procfont{lca(u,v)},
	\procfont{childrank($u$)}, \procfont{child($u$,$i$)} and \procfont{depth($u$)}
	operations in $\Ocomp(1)$ time;
	moreover
	with additional $\Ocomp(\ts (\log \log \ts)^2 / \log_\sigma \ts)$ bits 
	it can support \procfont{level\_ancestor($v$, $i$)} query.
\end{theorem}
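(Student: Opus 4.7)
The plan is to assemble the four components (T1)--(T4) described just above the theorem, verify the space bound via Theorem~\ref{lem:main_estimation}, and for every query show how to reduce it either to a cluster-tree query on $\mathcal{T}'$ or to a precomputed lookup inside a cluster. I split the work into three parts.

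\emph{Fixing the parameters.} I would set $m = \beta \log_\sigma \ts$, choosing $\beta$ so that Lemma~\ref{ref:lem_size_clusters} together with the cluster-size bound $2m-1$ gives $\sum_{i=1}^{2m-1} 2^{2i}\sigma^i = \Ocomp(\ts^{1-\alpha'})$ distinct clusters for some $\alpha' > 0$; $\beta = 1/8$ suffices when $\sigma \leq \ts^{1-\alpha}$. Thus the label alphabet of $P$ has size $\Ocomp(\ts^{1-\alpha'})$ and $|\mathcal{T}'| = \Ocomp(\ts/\log_\sigma\ts)$ by \Cref{1}.

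\emph{Accounting for the space.} For (T1) I would plug in any off-the-shelf succinct unlabeled-tree representation such as that of~\cite{lu2008balanced}, spending $\Ocomp(|\mathcal{T}'|)= \Ocomp(\ts/\log_\sigma\ts)$ bits. For (T2), in the regime $\sigma\leq \ts^{1-\alpha}$ I would encode $P$ with P\v{a}tra\c{s}cu's structure, spending $|P|H_0(P) + o(|P|)$ bits with constant-time access; the three forms of the space bound then follow by substituting the three estimates of $|P|H_0(P)$ from Theorem~\ref{lem:main_estimation}. For the second list, covering general $\sigma$, I would instead use the simpler statistical coder of~\cite{FerraginaV07SimpStat}, whose redundancy $\Ocomp(|P|\log\log|P|/\log_{\sigma'}|P|)$ for alphabet size $\sigma'$ produces the $\Ocomp(\ts\log\log\ts/\log_\sigma\ts)$ term. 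Component (T3) uses $\Ocomp(|\mathcal{T}'|)$ bits by Lemma~\ref{lem:label_seq}, and (T4) stores one answer table per cluster per operation, for a total of $\Ocomp(\ts^{1-\alpha'})\cdot\mathrm{poly}(\log_\sigma\ts)=o(\ts/\log_\sigma\ts)$ bits.

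\emph{Supporting the queries.} The uniform recipe is: locate the cluster(s) containing the argument(s) via \procfont{preorder-rank} on $\mathcal{T}'$ and the look-up table from $C(\mathcal{T})$; if the answer is in the same cluster, read it from (T4); otherwise perform one query on $\mathcal{T}'$ and re-enter a cluster via the port-indexed reverse query of Lemma~\ref{lem:label_seq}. Concretely, \procfont{parent}, \procfont{firstchild}, \procfont{nextsibling}, \procfont{child}, \procfont{childrank} each need at most one cluster-tree traversal step plus one degree-sequence lookup plus one (T4) access. For \procfont{depth} I would augment $\mathcal{T}'$ with prefix sums (over the preorder sequence) of per-cluster contributions to depth along the path to the parent port, so that the depth of $v$ inside its cluster (from (T4)) plus a constant-time accumulated cluster contribution yields the answer; both summands fit in $\Ocomp(|\mathcal{T}'|)$ bits. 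For \procfont{lca}$(u,v)$ I first handle the easy case that one cluster is an ancestor of the other via (T4); otherwise I call \procfont{lca} on $\mathcal{T}'$ to obtain a cluster, and use a precomputed (T4) table that, given the two incoming port indices, returns the common meeting node inside that cluster. Finally, \procfont{level\_ancestor}$(v,i)$ is implemented by a weighted level-ancestor structure on $\mathcal{T}'$ with weights in $[0,2m)$ (one weight per cluster, representing its depth contribution), adjusting $i$ at the endpoints using (T4); a standard weighted-LA data structure on a tree of size $\Ocomp(\ts/\log_\sigma\ts)$ with weights of $\Ocomp(\log\log\ts)$ bits uses the claimed $\Ocomp(\ts(\log\log\ts)^2/\log_\sigma\ts)$ extra bits.

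\emph{Main obstacle.} The delicate step is the bookkeeping at cluster boundaries: a single edge of $\mathcal{T}'$ does not contribute a uniform $+1$ to depth and is realised through a \emph{specific} port node of a cluster. The proof therefore hinges on showing that the per-cluster depth contributions, and more generally the weights used for level\_ancestor, can be stored using $\Ocomp(|\mathcal{T}'|\log\log\ts)$ bits (since each weight is in $[0,2m)$) and accessed in $\Ocomp(1)$ time alongside (T3) to translate between port indices in a cluster and child indices in $\mathcal{T}'$. The remaining subtlety is the space analysis of (T2) for large alphabet: invoking P\v{a}tra\c{s}cu's bound requires exactly the $|\Sigma_P| = \Ocomp(\ts^{1-\alpha'})$ estimate from Lemma~\ref{ref:lem_size_clusters}, which is where the hypothesis $\sigma\leq\ts^{1-\alpha}$ is consumed; without it one pays the worse $\Ocomp(\ts\log\log\ts/\log_\sigma\ts)$ term in the second list.
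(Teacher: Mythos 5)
Your high-level plan coincides with the paper's: same choice of $m$, same four components (T1)--(T4), same split between the small-alphabet regime (P\v{a}tra\c{s}cu) and the general regime (\cite{FerraginaV07SimpStat}), and the same ``cluster tree plus precomputed tables plus port-crossing bookkeeping'' recipe for the navigational queries. Two places, however, are treated as if they were routine when in fact they carry real content in the paper.

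First, for \procfont{childrank($v$)} and \procfont{child($v,i$)} it is not enough to do ``one degree-sequence lookup plus one (T4) access.'' A single port node $p$ typically connects to \emph{several} cluster children in $\mathcal{T}'$, and each of those clusters may contain several trees. Knowing which cluster child of $\mathcal{T}'$ contains the $i$-th child of $v$ therefore requires a separate prefix-sum structure over the sequence $T_p=t_{p,1},\dots,t_{p,j}$ of per-cluster child counts, indexed via the degree-sequence structure. The paper introduces Lemma~\ref{lem:partial_sums} precisely for this purpose; your proposal leaves this step implicit.

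Second, and more seriously, your argument for \procfont{level\_ancestor} appeals to ``a standard weighted-LA data structure'' achieving $\Ocomp(t(\log\log n)^2)$ bits on a tree of size $t$ with $\Ocomp(\log n)$-bounded weights. No such off-the-shelf structure is cited in the paper, and I do not believe one exists; the paper has to \emph{prove} Lemma~\ref{lem:structure_weighted_level_ancestor} by a second-level clustering of $\mathcal{T}'$ into subtrees of size $\Theta(\log^3 n)$, running \cite{bender2004level} on the macro tree and \cite{fullyFunctionalTrees} on each micro tree. Without this construction (or some substitute) the $\Ocomp(\ts(\log\log\ts)^2/\log_\sigma\ts)$ space claim is unsupported. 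You do identify cluster-boundary bookkeeping as the ``main obstacle,'' but you frame it as storing per-edge weights in $\Ocomp(\log\log\ts)$ bits each --- that part is indeed easy; what is missing is building a fast weighted level-ancestor structure within that space.

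A smaller point: for \procfont{depth} you propose ``prefix sums over the preorder sequence of per-cluster contributions.'' A straight preorder prefix sum does not equal the sum of weights along the root-to-node path (it would also accumulate contributions from earlier, unrelated subtrees). The paper sidesteps this by putting $+w$ on the opening parenthesis and $-w$ on the closing parenthesis in the balanced-parentheses string and then taking a signed prefix sum (split into two unsigned partial-sum structures). Your sketch should adopt that, or an excess-based equivalent.

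Everything else --- parameter choice, entropy accounting via Theorem~\ref{lem:main_estimation}, \Tref{1}/\Tref{3}/\Tref{4} being $\Ocomp(|\mathcal{T}'|)$ or $o(\ts)$, and the \procfont{parent}/\procfont{firstchild}/\procfont{nextsibling}/\procfont{lca} reductions --- matches the paper's proof.
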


\begin{proof}[Proof of Theorem~\ref{thm:theorem_tree_structure} (and~\ref{thm:theorem_tree_structure_full})]
	We start by proving the part for operations
	\procfont{firstchild(u)}, \procfont{parent(u)},
	\procfont{nextsibling(u)}, \procfont{lca(u,v)},
	as the rest requires additional structures.
	
	First we consider the case for $\sigma \leq \ts^{1-\alpha}$.
	
	To bound the memory consumption we sum
	needed space for \Trefall.
	\Tref{1}, \Tref{3}, \Tref{4} take at most
	$\Ocomp(|\mathcal{T'}|) = \Ocomp(\frac{\ts}{\log_\sigma \ts})$ bits.
	We bound space for \Tref{2} by Theorem~\ref{lem:main_estimation},
	this summand dominates others.
	
	One of the crucial part is that we can perform \procfont{preorder-rank}
	and \procfont{preorder-select} in constant time,
	this allows us to retrieve node labels of tree from $C(\mathcal{T})$ from preorder sequence
	(and thus the cluster) given node of $\mathcal{T'}$ in constant time.
	
	We now give the description of operations, let $u$ denote the node and $u'$ the name of its cluster.
	If the answer can be calculated using only the cluster of $u$
	(i.e.\ when $u$ and the answer is in the same cluster)
	we return the answer using precomputed tables.
	Thus in the following we give the description when the answer cannot be computed within
	the cluster $u'$ alone.

	\smallskip
	\noindent 
	\procfont{firstchild($u$)}:
	Using the structure for degree sequence~\Tref{3} we find
	index $i$ of child of $u'$ which represents cluster containing first child
	of $u$.
	We call \procfont{child($u$, $i$)} on the structure for unlabeled tree $\mathcal{T'}$ \Tref{1},
	to get this cluster, the answer is the root of the first tree in this cluster.
	
	\smallskip
	\noindent 
	\procfont{parent($u$)}
	We call \procfont{childrank($u'$)}
	on structure for $\mathcal{T'}$.
	This gives us index $i$ such that $u'$ is $i$-th node of node $v'$ representing the cluster containing \procfont{parent($u$)}.
	Now we query degree sequence structure,
	as it supports also reverse queries (see Lemma~\ref{lem:label_seq}) obtaining index of port node.
	Finally, we use precomputed table (i.e.\ we query the table which for given index of port node
	and given cluster returns this port node).
	
	\smallskip
	\noindent 
	\procfont{nextsibling}$(u)$:
	We call \procfont{nextsibling($u'$)}
	on structure for $\mathcal{T'}$ \Tref{1} and take root of the first tree
	in the cluster and verify that it has the same parent as $u$.
	
	
	\smallskip
	\noindent  \procfont{lca($u$, $v$)}: let $v'$ be the cluster of $v$.
	We use the structure for $\mathcal{T'}$~\Tref{1}:
	the answer is in the cluster which is represented 
	by node $l =  \procfont{lca($u'$ ,$v'$)}$ of $\mathcal{T'}$
	but we still need to determine the actual node inside the cluster.
	To this end we find nodes $u'', v''$ of $\mathcal{T'}$ such that:
	both are children of $l$,
	they are ancestors of $u'$ and $v'$ respectively,
	and $u''$ and $v''$ connect to some (port) nodes
	$x, y$ such that $x, y$ are in the cluster represented by node $l$
	and \procfont{lca($x, y$)} = \procfont{lca($u, v$)}.
	They can be comptued as follows:
	$u''$=
	\procfont{level\_ancestor($u'$, depth(lca($u'$ ,$v'$))-depth($u'$)-1)},
	the case for $v''$ is analogous.
	Having $u''$ and $v''$ we can, as in the case for \procfont{parent($v$)},
	reverse query the structure for degree sequence~\Tref{3}
	obtaining indices of $x$ and $y$.
	Finally we use precomputed table
	(as we want to find lowest common a
	for port nodes with indices $x$ and $y$ in given cluster).
	
	Note that most tree structures
	allow to find $u'', v''$ without
	calling \procfont{depth} and \procfont{level\_ancestor},
	as they are rank/select structures on balanced parenthesis,
	and it is easy to express this operation using such
	structures~\cite{fullyFunctionalTrees}.
\\\\
For general $\sigma$ we need to only slightly modify our solution.

To encode labels the string $P$ we use results from~\cite{FerraginaV07SimpStat},
which achieves $|P|H_0(P) + |P|\log \log |P|$ bits.
As $|P| = \Ocomp(\ts  / \log_\sigma \ts)$ this gives required bound.

Additionally if $\sigma = \Omega(\ts)$
we do not have to use precomputed tables,
as every cluster have constant number of nodes in it,
thus we can perform operations inside the clusters in constant time.

In other case we use tables of size $\Ocomp(\ts)$, this is still
within bounds, as this is dominated by $\Ocomp(\ts \log \log \ts / \log_\sigma \ts)$.
\end{proof}

Note that in even in Theorem~\ref{thm:theorem_tree_structure} for the case when $\sigma = \ts^{1-\alpha}$,
the guanrantee on the redundancy is $\Ocomp(n)$,
which is of the same magnitude as the size
of the encoding of the tree using parentheses,
so for large alphabets this dominates tree entropy.

Also in Theorem~\ref{thm:theorem_tree_structure}  for 
the case for arbitrary $\sigma$ we get a slightly worse redundancy,
i.e.\ we had $\Ocomp(\ts \log \log \ts / \log_\sigma \ts)$ factor
instead of $\Ocomp(\ts \log \log / \log_\sigma \ts)$.
Still, even in the case of $\sigma = \omega(n^{1-\alpha})$
we can get better bounds (more precisely: $\Ocomp(\ts \log \log_\sigma \ts / \log_\sigma \ts)$,
or $\Ocomp(n)$ for large alphabets)
by encoding string of labels $P$ using structure like~\cite{belazzougui2015optimal};
but at the cost that operations are slower than $\Ocomp(1)$.
%
\\\\
We now prove the rest of Theorem~\ref{thm:theorem_tree_structure},
that is we can add even more operations,
for the rest of the operations we will need more involved
data structure.
\paragraph{Succinct partial sums}
To realize more complex operations
we make use of structure for succinct partial sums.
This problem was widely researched, also in dynamic setting~\cite{succintPartialSums}.
For our applications, however, it is enough to use a basic, static structure
by Raman et al.~\cite{succintDictionariesWithTrees}.
\begin{lemma}\label{lem:partial_sums}
	For a table $|T|=n_t$ of nonnegative integers
	such that $\sum_{i} T[i] \leq n$
	we can construct a structure which answers the following queries in constant time:
	\procfont{sum($i,j$)}: $\sum_{y=i}^{j} T[y]$;
	\procfont{find($x$)}: find first $i$ such that $\sum_{y=1}^{y=i} T[y] \geq x$;
	and consumes $\Ocomp(n_t\log \frac{n}{n_t})$ bits.
\end{lemma}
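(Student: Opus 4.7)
The plan is to reduce the problem to rank/select queries on a sparse bitvector, so that both operations become standard dictionary queries, and then to cite a known succinct dictionary for the space bound.

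First I would represent the table $T$ in unary, concatenating the codewords in order: let $B = 0^{T[1]} 1\, 0^{T[2]} 1 \cdots 0^{T[n_t]} 1$, a bitvector of length $N = n_t + \sum_i T[i] \leq n + n_t$ with exactly $n_t$ ones. The key observation is that the position of the $i$-th $1$ in $B$ equals $i + \sum_{y=1}^{i} T[y]$, so the prefix sum up to $i$ is simply $\text{select}_1(i) - i$. Consequently $\text{sum}(i,j)$ reduces to two $\text{select}_1$ queries on $B$ (plus an arithmetic operation on machine words). For $\text{find}(x)$, the smallest index $i$ whose prefix sum reaches $x$ is the index of the $1$ immediately after (or equal to) the position of the $x$-th $0$ in $B$, so $\text{find}(x) = \text{rank}_1(\text{select}_0(x)) + 1$ (with the obvious corner case when $x=0$ or $x$ exceeds the total sum).

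Second, I would store $B$ using a fully indexable dictionary for sparse bitvectors, such as the one of Raman, Raman and Rao (already cited in the paper as~\cite{succintDictionariesWithTrees}), which represents a bit vector of length $N$ with $m$ ones using $\log \binom{N}{m}$ bits plus lower order terms and supports $\text{rank}_1$, $\text{select}_0$ and $\text{select}_1$ in $\Ocomp(1)$ time. Substituting $N \leq n + n_t$ and $m = n_t$ and using the standard estimate $\log \binom{n+n_t}{n_t} = \Ocomp\bigl(n_t \log \frac{n+n_t}{n_t}\bigr) = \Ocomp\bigl(n_t \log \frac{n}{n_t}\bigr)$ (for $n \geq n_t$, which is the regime of interest; otherwise the bound is trivial) gives the claimed space bound.

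The approach is essentially routine once the unary encoding trick is in place; the only delicate part is making sure that the lower order $o(N)$ terms of the chosen succinct bitvector do not dominate $\Ocomp(n_t \log(n/n_t))$. I would therefore be careful to pick a dictionary variant whose redundancy is expressed in terms of $\log \binom{N}{m}$ and absorbed into the $\Ocomp$ (rather than an additive $o(N)$ that could be as large as $o(n)$ and swamp the intended bound). No other step looks substantive: both queries collapse to $\Ocomp(1)$ operations on the dictionary, and correctness follows directly from the identity between the position of the $i$-th $1$ in $B$ and the prefix sum of $T$.
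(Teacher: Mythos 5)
Your proof is essentially identical to the paper's: both encode $T$ in unary as a bitvector, reduce \procfont{sum} and \procfont{find} to $\text{rank}/\text{select}$ queries, and appeal to the Raman--Raman--Rao fully indexable dictionary together with the estimate $\log\binom{n+n_t}{n_t} = \Ocomp(n_t\log\frac{n}{n_t})$. You actually spell out the $\text{select}_1$/$\text{select}_0$ reductions and flag the redundancy caveat more explicitly than the paper does, but the underlying construction and space analysis coincide.
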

\begin{proof}[Proof of Lemma~\ref{lem:partial_sums}]
	We exploit the fact that all the numbers sum up to $n$.
	This allows us to store $T$ unary, i.e.\ as string $0^{T[i]}1\ldots 0^{T[n_t]}1$.
	Now using rank/select we can realize desired operations, see~\cite{succintDictionariesWithTrees}
	for details.
	For a string with $n$ zeros and $n_t$ ones
	this structure takes $\log \binom{n+n_t}{n_t} + o(n_t)$ bits.
	This can be estimated as: $\log \binom{n+n_t}{n_t} + o(n_t) \leq n_t \log (\mathrm{e}(n+n_t)/n_t) + o(n)
	= \Ocomp(n_t \log n/n_t)$, as claimed. 
\end{proof}

\paragraph{\procfont{childrank($v$)}, \procfont{child($v, i$)}}
Observe first that if $v$ and its parent
are in the same cluster then \procfont{childrank}$(v)$
can be answered in constant time,
as we preprocess all clusters.
The same applies to $v$ and its children in case of
\procfont{child}$(v,i)$.
Thus in the following we consider only the case when $v$
is a root of a tree in a cluster (for \procfont{childrank}$(v)$)
or it is a port (for \procfont{child}$(v,i)$).

The problem with those operations is
that one port node $p$ connects to multiple clusters,
and each cluster can have multiple trees.
We solve it by storing for each port node $p$
a sequence $T_p = t_{p, 1}, \ldots, t_{p, j}$
where $t_{p, j}$ is the number of children of $p$ in the $j$-th (in left to right order)
cluster connecting to $p$.

Observe that all sequences $T_p$ contain in total $|\mathcal{T'}|-1$
numbers, as each number corresponds to one cluster.
To make a structure we first concatenate 
all sequences according to preorder of nodes
in $\mathcal{T'}$, and if multiple nodes
are in some cluster we break the ties by
left-to-right order on port nodes.
Call the concatenated sequence $T$.
Using structure from Lemma~\ref{lem:label_seq}
for port node $p$ we can find indices $i_1, i_2$
which mark where the subsequence corresponding to $T_p$ starts and ends in $T$,
i.e.\ $T[i_1 \twodots i_2-1] = T_p$.


We build the structure from Lemma~\ref{lem:partial_sums}
for $T$, this takes $\Ocomp(\ts \log \log_\sigma \ts / \log_\sigma \ts)$ bits,
as $|T| = \Ocomp\left(\frac{\ts}{\log_\sigma \ts}\right)$ and all elements in $T$ sum up to at most $\ts$.
We realize \procfont{childrank($v$)} as follows:
let $v'\in \mathcal{T'}$ be a node representing cluster containing $v$ (by the assumption: as a root).
First we find indices $i_1, i_2$ corresponding to subsequence $T_p$,
where $p$ is port node which connects to $v'$.
Let $j$ = \procfont{childrank($v'$)} in $\mathcal{T'}$.
Now it is enough to get \procfont{sum($i_1$, $j-1$)},
as this corresponds to number of children in first $j-i_1$
clusters connected to $p$,
and add to the result the rank of $v$ in its cluster,
the last part is done using look-up tables.

The \procfont{child($v$, $i$)} is analogous:
we find indices $i_1, i_2$ corresponding to $T_p$.
Then we call \procfont{find($i$ + sum($1$, $i_1$-1))}
to get cluster containing \procfont{child($v$, $i$)},
as we are interested in first index $j$ such that
$T[i_1] + \ldots + T[j] > i$.
This way we reduced the problem to find $i'$-th
node in given cluster, this can be done using precomputed tables.
\paragraph{\procfont{depth($v$)}}
The downside of clustering procedure is that
we lose information on depth of vertices.
To fix this, we assign to each edge a non-negative natural \emph{weight} in the following way:
Let $v \in \mathcal{T'}$ be any vertex and
$p$ be a port node in cluster represented
by \procfont{parent($v$)}.
For an edge ($v$, \procfont{parent($v$)})
we assign depth of $p$ in cluster represented by \procfont{parent($v$)}.
For example in Figure~\ref{fig:trees}
for edge $(D,B)$ we assign $1$,
and for edge $(I, B)$ we assign $2$.
In this way the depth of the cluster $C$ (alternatively: depth of roots of trees in $C$)
in $\mathcal{T}$
is the sum of weights of edges from root to node representing $C$.

A data structure for calculating depths is built using
a structure for partial sums:
Consider balanced parentheses representation of $\mathcal{T'}$.
Then we assign each opening parenthesis corresponding to node $v$
weight $w(v,parent(v))$ and each closing 
parenthesis weight $-w(v,parent(v))$.
This creates the sequence of numbers,
for example, for tree from Figure~\ref{fig:trees}
we have:

\noindent
\begingroup
\setlength{\tabcolsep}{5pt}
\begin{tabular}{cccccccccccccccccccccccc}
	( & (& (& )& (& )& (& )& (& )& (& )& (& )& (& )& )& (& (& )& (& )& )& ) \\
	0 & 1 & 1 & -1 & 1 & -1 & 1 & -1 & 2 & -2 & 2 & -2 & 2 & -2 & 2 & -2
	& -1 & 1 & 1 & -1 & 1 & -1  & -1 & 0
\end{tabular}
\endgroup

\noindent
Then we can calculate depth of a cluster by calculating the prefix sum.
Observe that our partial sums structure does not work on negative numbers,
but we can solve that by creating two structures,
one for positive and one for negative number
and subtract the result.
Finally we use look-up table to find the depth in the cluster

The total memory consumption is bounded
by $\Ocomp(\ts \log \log_\sigma \ts / \log_\sigma \ts)$,
as all weights sum to at most $\ts$.
\paragraph{\procfont{level\_ancestor($v$, $i$)}}\label{sec:lvl_ancestor}
We assign weights to edges as in the case of \procfont{depth} operation.
This reduces the level ancestor in $\mathcal T$ to
\emph{weighted level ancestor} in $\mathcal T'$;
in this problem
we ask for such ancestor $w$ of $v$ that sum of
weights on the path from $w$ to $v$ is at least $i$
and $w$ is closest node to $v$ in the terms of number of nodes on the path
(note that there may not exist a node for which the sum is equal exactly to $i$).
The redundancy obtained for level\_ancestor operation is slightly
worse than for previous operations,
but not worse than most of the other structures~\cite{ultraSuccintTrees, geary2006succinct}
supporting this operation.
Observe that each edge has weight of order $\Ocomp(\log_\sigma \ts)$.
From the following theorem we get that additional
$\Ocomp(\ts (\log \log \ts)^2 / \log_\sigma \ts)$ bits is sufficient.
\begin{lemma}\label{lem:structure_weighted_level_ancestor}
	Let $\mathcal{T'}, |\mathcal{T'}| = t$ be a tree
	where each edge is assigned a weight of at most $\Ocomp(\log n)$,
	for some $n$.
	We can builds structure which consumes
	$\Ocomp(t (\log \log n)^2)$ bits of memory
	and allows to answer weighted level ancestor queries
	in $\Ocomp(1)$ time.
\end{lemma}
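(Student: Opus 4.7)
The plan is to adapt the classical Bender--Farach-Colton $\Ocomp(1)$-time level-ancestor machinery to the weighted setting, exploiting that every edge weight in $\mathcal{T}'$ is bounded by $\Ocomp(\log n)$, and in particular that the weighted depth of any node is only a logarithmic factor above its unweighted depth. I would build the structure in three layers.

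First, decompose $\mathcal{T}'$ into micro-trees: a partition into connected subtrees of size $\Theta(\log \log n)$ each, producing a macro-tree on $\Ocomp(t/\log\log n)$ nodes (this is analogous in spirit to the clustering of Section~\ref{sec:application_compressing} but much coarser). Because each micro-tree has $\Ocomp(\log\log n)$ edges, each carrying an $\Ocomp(\log n)$-valued weight, a single micro-tree is described by $\Ocomp((\log\log n)^2)$ bits, and thus there are only $2^{\Ocomp((\log\log n)^2)}$ distinct weighted micro-tree shapes. For every shape I would precompute a table giving the answer to every internal weighted level-ancestor query; these tables together occupy $o(t)$ bits.

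Second, on the macro-tree run the weighted variant of ladder decomposition combined with jump pointers. Split it into long root-to-leaf paths using weighted depth, then extend each long path upward by its own weighted length, yielding a ladder covering; for every macro node store jump pointers to ancestors at weighted distances $2^0, 2^1, \ldots$ up to one ladder's weighted length. Because edge weights are bounded by $\Ocomp(\log n)$ and a single ladder covers only $\Ocomp(\log n)$ weighted units before doubling kicks in, only $\Ocomp(\log \log n)$ jump pointers per node are ever needed. A query $(v,i)$ is then resolved in $\Ocomp(1)$ by the standard recipe: one jump pointer overshoots or near-matches $i$, the ladder it lands on is walked to locate the target micro-tree, and a final look-up in the precomputed micro-tree table pinpoints the answer.

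Third, handle the bit accounting, which I expect to be the main obstacle. A naive encoding of each jump pointer as a global node identifier costs $\Theta(\log t)$ bits, which is too expensive. The fix is to store each jump pointer as a \emph{relative offset along the ladder it reaches}, rather than as an absolute macro-node index: because ladders have weighted length $\Ocomp(\log n)$ and every edge has weight at least $1$, any offset index lies in a range of size $\Ocomp(\log n)$ and thus fits in $\Ocomp(\log\log n)$ bits. Combined with $\Ocomp(\log\log n)$ pointers per macro node we pay $\Ocomp((\log\log n)^2)$ bits per macro node; amortizing the look-up table pointers and micro-tree metadata over all $t$ original nodes, the total is $\Ocomp(t(\log\log n)^2)$ bits as claimed. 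The subtle part I would need to verify carefully is that this relative encoding is genuinely decodable in $\Ocomp(1)$ (which requires a constant-time ladder-traversal primitive) and that no residual pointer ever needs to reach outside its ladder, which is exactly where the weight bound $\Ocomp(\log n)$ together with ladder doubling is used.
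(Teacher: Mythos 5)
Your high-level plan — a macro/micro decomposition, precomputed tables (or a black-box succinct small-tree structure) at the micro level, and a combinatorial Bender--Farach-Colton-style structure at the macro level — is the same template the paper follows. But the parameters and one key claim are off, and the discrepancy is not cosmetic.

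The gap is in your second layer. You assert that ``a single ladder covers only $\Ocomp(\log n)$ weighted units before doubling kicks in,'' and deduce that each macro node needs only $\Ocomp(\log\log n)$ jump pointers, each stored as an $\Ocomp(\log\log n)$-bit offset. Neither count holds. In long-path/ladder decomposition a ladder is twice the length of an underlying long path, and long paths can be as long as the tree height; with edge weights up to $\Ocomp(\log n)$ the weighted span of a single ladder can therefore be $\Theta(t\log n)$, not $\Ocomp(\log n)$. Consequently a jump pointer to a position inside a ladder is an index into a range of size up to $\Theta(t)$, costing $\Theta(\log t)$ bits, and the number of powers of two needed to cover the weighted depth range is $\Theta(\log(t\log n))=\Theta(\log t)$, not $\Ocomp(\log\log n)$. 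Storing this at every one of your $\Ocomp(t/\log\log n)$ macro nodes gives $\Omega(t\log^2 t/\log\log n)$ bits — far above budget. Making ladders short enough for your offset argument would destroy the doubling property that gives $\Ocomp(1)$ query time.

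The paper sidesteps exactly this tension by choosing a much coarser micro-tree size: it clusters into pieces of size $\Theta(\log^3 n)$ (via Lemma~\ref{lem:clustering_procedure}), so the macro tree $\mathcal{T}''$ has only $\Ocomp(t/\log^3 n)$ nodes and one can afford $\Ocomp(\log^2 n)$ bits per macro node, which is what a black-box invocation of the weighted level-ancestor structure of Bender and Farach-Colton~\cite{bender2004level} costs; the macro-level total is then $o(t)$. Inside each micro-tree the paper does not tabulate at all — with $\log^3 n$ nodes and weights up to $\log^4 n$ the number of shapes is no longer $n^{o(1)}$, so tabulation breaks — but instead plugs in the succinct weighted tree structure of Navarro and Sadakane~\cite{fullyFunctionalTrees}, which for a tree of size $t'$ with weights bounded by $w$ uses $\Ocomp(t'\log t'\log(t'w))$ bits; at $t'=\Theta(\log^3 n)$ and $w=\Ocomp(\log^4 n)$ both logarithmic factors collapse to $\Theta(\log\log n)$, giving $\Ocomp(t'(\log\log n)^2)$ per micro-tree and $\Ocomp(t(\log\log n)^2)$ overall. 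So the $(\log\log n)^2$ in the bound comes from the micro-tree black box, not from a jump-pointer count; and the coarse clustering is what makes the macro-tree cheap. You would need to change both your clustering granularity and your micro-tree representation to close the gap.
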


With the structure from Lemma~\ref{lem:structure_weighted_level_ancestor} the query 
\procfont{level\_ancestor($v$, $i$)}
is easy:
first we check if the answer is in the same cluster using
preprocessed array.
If not we find cluster containing the answer,
we do it by asking for \procfont{level\_ancestor($C$, $i$-depth($C$, $v$))},
where \procfont{depth($C$, $v$)} is depth of $v$ in cluster $C$ containing $v$.
There is similar problem as in the case of \procfont{lca} query,
that is we also need to find the port node
on path from given vertex $v$ to its $i$-th ancestor.
This may be solved in the same manner as in the case for \procfont{lca}.

Now we give the construction for weighted level ancestor structure.
Note that there are multiple ways of doing this~\cite{geary2006succinct,fullyFunctionalTrees,
ultraSuccintTrees}.
We use the tree partitioning approach,
yet the one that operates on sequence of numbers, as in case for depth,
should also be applicable, \cite{ultraSuccintTrees} shows similar method
(and uses same additional space),
yet for simplicity we choose to stick with solution which partition
the tree into subtrees as we already defined most of the required machinery.
Note that tree partitioning method~\cite{geary2006succinct},
which we refer to, partitioned the tree a few times,
we do it once and use stronger result~\cite{fullyFunctionalTrees} for the 
simplicity of proof.
Also, the partitioning from~\cite{geary2006succinct}
may be used instead of our method.
\begin{proof}[Proof of Lemma~\ref{lem:structure_weighted_level_ancestor}]
	We use the idea from~\cite{geary2006succinct}.
	We first cluster the tree according to Lemma~\ref{lem:clustering_procedure}
	with $m = \Theta(\log^3 n)$.
	We obtain a smaller tree $\mathcal{T''}$ of size
	$|\mathcal{T''}| = \Ocomp\left(\frac{t}{\log^3 n}\right)$.
	We store labels of $\mathcal T''$ and the descriptions of clusters naively, without the entropy coder.
	We also store additional structure for navigation of $\mathcal{T''}$,
	including degree sequences,
	observe that it takes at most $\Ocomp(t)$ bits.
	
	As before (i.e.\ for the case of depth), we assign weights to edges, but we have to remember 
	that input tree is weighted as well.
	Let $v \in \mathcal{T'}$ be any vertex and
	$p$ be a port node in cluster represented \procfont{parent($v$)}.
	For an edge $(v, \procfont{parent(v)})$, where $v \in\mathcal{T''}'$,
	we assign sum of weights on the path from $p$ to root of the cluster containing $p$.
	
	Observe that the weights in the $\mathcal{T''}$
	are of order $\Ocomp(\log^4 n)$.
	For such weighted tree we can build structure which supports
	\procfont{level\_ancestor} queries in $\Ocomp(1)$ time
	and use $\Ocomp(|\mathcal{T''}| \log^2 n) = o(n)$ bits,
	using result from~\cite{bender2004level}.
	
	Now for each smaller tree we can build
	structure from~\cite{fullyFunctionalTrees}.
	For a tree of size $t'$ and with weights limited 
	by $\Ocomp(\log^4 n)$ this structure takes $\Ocomp(t' \log t' \log (t'\log^4 n))
	= \Ocomp(t'(\log \log n)^2)$ bits.
	Summing over all trees, we get $\Ocomp(t (\log \log n)^2)$, as claimed.
	For each such tree we additional store the information
	which nodes are port nodes in a way that
	allow to retrieve $i$-th port node and.
	This can be achieved by storing the bitmap
	for each small tree (in which each $j$-th element indicates if $j$ leaf is port node or not)
	and applying rank/select structure (this consumes $\Ocomp(t)$ bits for all trees).
	Observe that we can even explicitly list all of the port nodes:
	we do not have to use space efficient solution,
	as there are at most $|\mathcal{T''}|$ such nodes,
	so even consuming $\Ocomp(\log^2 n)$ bits for port node
	is sufficient.
	
	We perform \procfont{level\_ancestor} operation in the same
	manner as previously described when applying Lemma~\ref{lem:structure_weighted_level_ancestor},
	i.e.\ we first check if the answer is in the same tree,
	if yes we can output the answer as 
	we can perform operations on small trees in $\Ocomp(1)$ time
	~\cite{fullyFunctionalTrees},
	if not we use combination of \procfont{depth} and $\procfont{level\_ancestor}$
	queries on structure for $\mathcal{T''}$
	in the same way as we did for \procfont{lca} query
	(see proof of Theorem~\ref{thm:theorem_tree_structure}).
	It is possible as structure~\cite{bender2004level} supports all of the required
	operations (or can be easily adapted to support
	by adding additional tree structure, 
	as the structure for $\mathcal{T''}$ can consume up to  $\Ocomp(\log^2 n)$ bits per node,
	in particular this means that we can even preprocess all answer for
	$\procfont{depth}$ queries for $\mathcal{T''}$.
	
	The only nontrivial thing left is that we would like to
	not only find a node in our structure but also
	find corresponding node in structure for $\mathcal{T'}$
	which supported other operations.
	To this end we show that we can return preorder position
	of given node, this is sufficient as $\mathcal{T'}$
	has \procfont{preorder-select} operation.
	
	To this end we explicitly store preorder and subtree size
	for each port node, observe that this consumes
	at most $\Ocomp(\mathcal{T''} \log t) = \Ocomp(t)$ bits.
	Now given a node $v$ in some cluster
	to find preorder rank of this node in $\mathcal{T'}$ we first
	find preorder rank of $v$ in cluster $C$ containing $v$
	then sum it with the rank of port node $p_v$ which is connected to $C$
	and the sizes of subtrees $T_i$ which are connected to port nodes $c_i$ of $C$,
	such that each $c_i$ precede $v$ in preorder ordering in $C$.
	To find sum of sizes of $c_i$ we first find how many $c_i$'s precede $v$
	in preorder ordering in $C$, to this end we   use rank/select structure for binary vector $B_C$
	where $B_C[i] = 1$ if and only if $i$-th node according to preorder ordering in $C$.
	Then we use the structure for partial sums
	(again, we do not have to use succinct structure
	as there are at most $\Ocomp(\mathcal{T''})$ elements in total).
	
\end{proof}

\section{Additional material for Section~\ref{sec:even_succinter}}
We start by stating that by that we can get
better estimation by encoding each class separately.
\begin{lemma}\label{lemma:main_estimation_boosting}
	Let $\mathcal{T'}$ be a labeled cluster tree
	from Lemma~\ref{lem:cluster_structure}
	for parameter $m$,
	obtained from $\mathcal{T}$.
	For each $k$-letter context $K_i$ let $P_{K_i}$
	be a concatenation of labels of $\mathcal{T'}$
	which are preceded by this context (i.e.\ each root $v$ in each cluster
	is preceded by the context $K_i$ in $\mathcal{T}$).
	Then \emph{all} of the following inequalities hold:
	
	\begin{enumerate}\setlength\itemsep{0.1em}
		\item \label{lemma_boosting:case1}
		$ \sum_i |P_{K_i}| H_0(P_{K_i}) \leq \ts H(\mathcal{T}) + \ts H_k(L) +
		\Ocomp\left(\frac{\ts \log m}{m} \right)$;
		\item \label{lemma_boosting:case2} 
		$ \sum_i |P_{K_i}| H_0(P_{K_i}) \leq 
		\ts H_k(\mathcal{T}|L) + \ts H_k(L)+
		\Ocomp\left(\frac{\ts \log m}{m} \right)$;
		\item \label{lemma_boosting:case3}
		$\sum_i |P_{K_i}| H_0(P_{K_i}) \leq
		\ts H(\mathcal{T}) + \ts H_k(L|\mathcal{T})+
		\Ocomp\left(\frac{\ts \log \sigma}{m}
		+ \frac{\ts \log m}{m} \right)$.
	\end{enumerate}
\end{lemma}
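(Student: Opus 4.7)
The plan is to adapt the proof of Theorem~\ref{lem:main_estimation} by removing the contribution that came from encoding the context explicitly. In that earlier proof the factor $q(K_C) = 1/((|K_C|+1)\sigma^{|K_C|})$ in the probability assignment was responsible for the $\Ocomp(\ts k\log\sigma/m)$ summand. Because we now partition clusters into classes $P_{K_i}$ and apply a separate zeroth order coder to each class, the context is encoded implicitly by the choice of class and can be dropped from $q$.

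Fix a context $K$ and let $R_K$ be the sequence of cluster descriptions $R_{K,C}$ corresponding to clusters in the class $P_K$, listed in the same order. As in the proof of Theorem~\ref{lem:main_estimation}, distinct descriptions produce distinct labels in $\mathcal{T'}$, so $|P_K|H_0(P_K) \leq |R_K|H_0(R_K)$. For case~\ref{lemma_boosting:case2} I would define the restricted assignment $q'(R_{K,C}) = q(N_C) \cdot q(V_C)$ with $q(N_C)=1/(2m)$ and $q(V_C)$ exactly as in the proof of Theorem~\ref{lem:main_estimation}. The nested-interval construction from that proof, restricted to a single class, shows $\sum_{C} q'(R_{K,C}) \leq 1$: since all descriptions in the class share context $K$, the partition step corresponding to $q(K_C)$ is no longer needed, and every subsequent step of the construction is well-defined because the already-processed prefix of each description still uniquely determines $K_v$ (the fixed $K$ plus the path so far inside the cluster) whenever we partition by $\prob(l_v|K_v)$ or $\prob(d_v|K_v,l_v)$. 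Applying Lemma~\ref{theoremP} yields $|R_K|H_0(R_K) \leq -\sum_{v \in \text{class } K} \log q'(R_{K,C_v})$.

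Summing over all contexts $K$, the $q(N_C)$ terms contribute $\sum_K |R_K| \log(2m) = |\mathcal{T'}|\log(2m) = \Ocomp(\ts\log m/m)$, which matches the stated additive term. The $q(V_C)$ terms aggregate to $-\sum_{u \in \mathcal{T}} \log q(u)$, bounded exactly as in the original proof by $\ts H_k(\mathcal{T}|L) + \ts H_k(L) + \Ocomp(\ts/m)$, where the $\Ocomp(\ts/m)$ absorbs both the $\log(m/(m-1))$ correction at regular nodes and the $\log m$ factor at port nodes (of which there are $\Ocomp(\ts/m)$). This gives case~\ref{lemma_boosting:case2}. Case~\ref{lemma_boosting:case1} then follows either by Lemma~\ref{lem:entropies_not_larger} applied to case~\ref{lemma_boosting:case2} or by replacing $\prob(d_v|K_v,l_v)$ with $\prob(d_v|K_v)$ in $q(V_C)$ for regular nodes. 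Case~\ref{lemma_boosting:case3} uses the alternative $q$ from the proof of Theorem~\ref{lem:main_estimation} in which port nodes receive $q(v) = \frac{1}{m}\cdot\frac{1}{\sigma}$ (since we cannot condition on degree without knowing it), producing the extra $\Ocomp(\ts\log\sigma/m)$ summand from the $\Ocomp(\ts/m)$ port nodes.

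The main obstacle is verifying that $\sum_{C} q'(R_{K,C}) \leq 1$ survives class-by-class without the $q(K_C)$ factor; this is precisely where the compression-boosting style partition pays off, since disambiguation between different contexts is handled by the class index rather than by spending bits inside $q$. Once that sub-additivity is confirmed, everything else is the routine bookkeeping already carried out for Theorem~\ref{lem:main_estimation}, with the $\Ocomp(\ts k\log\sigma/m)$ summand cleanly disappearing from cases \ref{lemma_boosting:case1} and \ref{lemma_boosting:case2} and being replaced by the much smaller $\Ocomp(\ts\log\sigma/m)$ in case~\ref{lemma_boosting:case3}.
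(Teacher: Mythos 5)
Your proposal is correct and matches the paper's proof essentially verbatim: the paper also defines $q(C) = q(N_C)\cdot q(V_C)$ by dropping the $q(K_C)$ factor from the proof of Theorem~\ref{lem:main_estimation} and applies Lemma~\ref{theoremP} class by class. Your additional verification that the nested-interval argument remains valid when restricted to a single context class (because $K$ is supplied by the class index rather than by the partition step) is exactly the justification the paper leaves implicit with ``it is easy to check.''
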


\begin{proof}[Proof of Lemma~\ref{lemma:main_estimation_boosting}]
	The proof is very similar to the proof of Theorem~\ref{lem:main_estimation}.
	For each $T_{K_i}$ we apply the Lemma~\ref{theoremP}:
	we use almost the same values of $q$ function for each cluster
	but we do not need to multiply it by $q(K_C)$,
	i.e.\ we define $q(C)= q(N_C)\cdot q(V_C)$.
	For detailed definition of $q$ see proof of Theorem~\ref{lem:main_estimation}.
	It is easy to check that without this factor we arrive at the claim.
\end{proof}

\begin{lemma}
	\label{lem:huffman_encode_boosting}
	Let $\mathcal{T'}$ be a labeled cluster tree
	from Lemma~\ref{lem:cluster_structure}
	for parameter $m$,
	obtained from $\mathcal{T}$.
	Let $P$ be a string obtained by concatenation of labels of
	$\mathcal{T'}$.
	Then we can encode $P$ in a way, that, given context $K_{P[i]}$,
	we can retrieve each $P[i]$ in constant time.
	The encoding is bounded by \emph{all} of the following values:
	
	\begin{enumerate}\setlength\itemsep{0.1em}
		\item
		$\ts H(\mathcal{T}) + \ts H_k(L) +
		\Ocomp\left(\frac{\ts (\log m + \log \log \ts)}{m}  +
		\sigma^{k+m} \cdot 2^m \cdot \log \ts \right)$;
		\item
		$\ts H_k(\mathcal{T}|L) + nH_k(L)+
		\Ocomp\left(\frac{\ts(\log m + \log \log \ts)}{m} +
		\sigma^{k+m} \cdot 2^m \cdot \log \ts \right)$;
		\item
		$\ts H(\mathcal{T}) + \ts H_k(L|\mathcal{T})+
		\Ocomp\left(\frac{\ts \log \sigma}{m}
		+ \frac{\ts (\log m + \log \log \ts)}{m} + 
		\sigma^{k+m} \cdot 2^m \cdot \log \ts \right)$.
	\end{enumerate}
\end{lemma}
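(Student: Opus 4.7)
The plan is to follow the compression boosting paradigm: partition the cluster labels of $P$ by their preceding $k$-letter context in $\mathcal{T}$, build a zeroth-order prefix code per class, and invoke Lemma~\ref{lemma:main_estimation_boosting} to bound the aggregate cost. For each $k$-letter context $K$ I will form the subsequence $P_K$ of those labels of $P$ whose cluster is preceded by $K$, and construct a canonical Huffman code over the clusters appearing in $P_K$. Storing the code and decoding tables for all contexts costs $\Ocomp(\sigma^{k+m}\cdot 2^m \cdot \log \ts)$ bits, since there are at most $\sigma^k$ contexts and, by Lemma~\ref{ref:lem_size_clusters}, at most $\sigma^m\cdot 2^m$ distinct clusters, each table entry taking $\Ocomp(\log \ts)$ bits.

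I will then assemble a global bitstring $B$ by concatenating the codewords of the clusters in the order of $P$, drawing each codeword from the code of its context. Huffman's $1$-bit-per-symbol slack adds $\Ocomp(|P|)=\Ocomp(\ts/m)$ bits, absorbed by the $\Ocomp(\ts \log m / m)$ term of Lemma~\ref{lemma:main_estimation_boosting}, so $|B|=\sum_K |P_K|H_0(P_K)+\Ocomp(\ts\log m/m)$. Because the minimum empirical probability within any $P_K$ is at least $1/|P_K|$, every Huffman codeword has length $\Ocomp(\log \ts)$ and $|B|=\Ocomp((\ts/m)\log \ts)$. To locate the codeword for a given position of $P$, I will equip $B$ with an Elias-Fano select structure over the $|P|=\Ocomp(\ts/m)$ codeword boundaries; this structure occupies $|P|\log(|B|/|P|)+\Ocomp(|P|)=\Ocomp(\ts\log\log \ts / m)$ bits and answers each select in $\Ocomp(1)$. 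A query ``return $P[i]$ given $K_{P[i]}$'' then uses select to obtain the bit offset of the $i$-th codeword, loads the $\Ocomp(\log \ts)$-bit chunk at that offset into a machine word, and decodes one codeword with the table of $K_{P[i]}$. Summing the code length, the select structure, and the code tables, and substituting each case of Lemma~\ref{lemma:main_estimation_boosting} into the codeword contribution produces exactly the three claimed bounds (with the extra $\Ocomp(\ts\log\sigma/m)$ appearing only in Case~\ref{lemma_boosting:case3}).

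The main obstacle I anticipate is executing the decoding step in $\Ocomp(1)$ worst-case time without storing a dedicated lookup decoder per context: there are up to $\sigma^k$ tables and a naive per-table decoder would destroy the space budget. The resolution I intend to use is canonical Huffman codes together with a universal decoding routine that reads the loaded $\Ocomp(\log \ts)$-bit chunk and, via a constant number of word-level comparisons against per-context length-boundary values plus one symbol-table lookup, returns the cluster in $\Ocomp(1)$; the per-context length metadata needs only $\Ocomp(\log \ts)$ bits per context and therefore fits within the already-counted $\Ocomp(\sigma^{k+m}\cdot 2^m \cdot \log \ts)$ table budget. Paired with the constant-time Elias-Fano select, this delivers the promised constant-time retrieval.
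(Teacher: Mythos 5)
Your proposal is correct and follows essentially the same route as the paper: partition by $k$-letter context, apply per-context Huffman codes and Lemma~\ref{lemma:main_estimation_boosting}, concatenate codewords in the order of $P$, and store a constant-time select structure over the $\Ocomp(\ts/m)$ codeword boundaries (the paper uses the Raman et al.\ bitmap where you use Elias--Fano, but the $\Ocomp(\ts\log\log\ts/m)$ bound is the same). The only notable difference is that you spell out the canonical-Huffman mechanism for constant-time per-context decoding, which the paper leaves implicit in the phrase ``given its start and end in $T_C$ we can decode it in constant time.''
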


\begin{proof}[Proof of Lemma~\ref{lem:huffman_encode_boosting}]
	We use Lemma~\ref{lemma:main_estimation_boosting}.
	For each $P_{K_i}$ we generate codes using Huffman encoding,
	this allows us to encode each $P_{K_i}$ using
	$|P_{K_i}|H_0(P_{K_i}) + |P_{K_i}|$ bits,
	as we lose at most $1$ bit per code (see~\cite{FerraginaV07SimpStat} for example),
	plus additional $\Ocomp(2^m \sigma^m \log |\mathcal{T'}|) \leq \Ocomp(2^m \sigma^m \log \ts)$ bits for Huffman dictionary.
	Summing this over all contexts $K_i$ yields the bound, by Lemma~\ref{lemma:main_estimation_boosting}.
	
	Denote $c_v$ as Huffman code for vertex $v$ of $\mathcal{T'}$.
	Let $T_C$ be the concatenation of all codes $c_v$ according to order in string $P$.
	As each code is of length at most $\Ocomp(\log \tps)$,
	given its start and end in $T_C$
	we can decode it in constant time.
	We store the bitmap of length $|T_C|$ where $T_C[j]=1$ if and only if
	at $j$-th is the beginning of some code.
	This bitmap has length at most $\tps \log \ts$ and has $\tps$ ones.
	For such a bitmap we build rank/select structure,
	using result by Raman et al.~\cite{succintDictionariesWithTrees}
	this takes $\Ocomp(\tps \log \log \ts)$ bits.
	Using rank/select we can retrieve the starting position of $i$-th
	code by simply calling select($i$) (same idea was used in~\cite{GonzalezNStatistical}).
\end{proof}


Now we would like to simply apply Lemma~\ref{lem:huffman_encode_boosting},
which says that we can encode labels of $\mathcal{T'}$ more efficiently,
yet there is one major difficulty:
the Lemma states that to decode $P[i]$ we need to know the context.
The idea is that we choose $|\mathcal{T'}|/d$ nodes for
which we store the context, for rest we can retrieve the contexts
in time $\Ocomp(d)$ by traversing $\mathcal{T'}$ and decoding
them on the way.

\begin{proof}
	[Proof of
	Theorem~\ref{thm:theorem_tree_structure_boosting}]
	We use almost the same structures as in the simpler case,
	i.e.\ the structure from Theorem~\ref{thm:theorem_tree_structure},
	the only difference is that instead encoding preorder
	sequence $P$ with structure using space proportional to zeroth
	order entropy we apply Lemma~\ref{lem:huffman_encode_boosting}.
	We choose $m = \beta \log_\sigma \ts$ so that $2\beta + \alpha < 1$
	and $\beta < \frac{1}{8}$, so that the precomputed tables use $o(\ts)$ space.
	As each operation in Theorem~\ref{thm:theorem_tree_structure}
	accessed elements of $P$ constant number of times
	it is sufficient to show how to access it in time $\Ocomp(\log \ts / \log \log \ts)$ time.
	By Lemma~\ref{lem:huffman_encode_boosting} this leaves us with problem of
	finding context for each node in aforementioned complexity.
	
	
	Let $d = \lceil \log \ts / \log \log \ts \rceil$.
	We choose at most
	$\Ocomp(\tps \log \log \ts / \log \ts)
	$=$\Ocomp(\tps/d)$
	nodes, for which we store the context explicitly,
	in the following way:
	We store the context for the root using
	$\lceil k \log \sigma\rceil$ bits.
	We partition the nodes into $d$ classes $C_i$ depending on their depth modulo $d$,
	i.e.\ in the class $C_i$ there are nodes at depth $dj + i, j \geq 0$.
	Then there is a class having at most $\mathcal{T'}/d$ such nodes,
	we choose all nodes in this class and store their contexts.
	
	Now we show, assuming we know the contexts for chosen nodes, 
	that given a node we can retrieve its context in $\Ocomp(d)$ time.
	If we want to compute the context for some node $v$
	we first check whether its stored explicitly.
	If not, we look at nodes on path from $v$ to the root,
	until we find a node $u$ which has its context stored.
	Call the visited nodes $v, v_1, v_2, \ldots, v_i, u$.
	Observe that we visited at most  $\Ocomp(d)$ nodes that way.
	As we know the context for $u$, now we can decode the node $u$,
	and determine the context for $v_i, v_{i-1},\ldots, v_1, v$.
	To read the labels in $u$ which precede $v_i$ in constant time
	we first find port node which connects to $u$ (as in the proof of Theorem~\ref{thm:theorem_tree_structure})
	and use the precomputed tables.
	
	The only nontrivial thing left to explain is how to store 
	the contexts for chosen $\Ocomp(\tps/d)$ nodes
	and check which nodes have their contexts stored.
	We concatenate all contexts for chosen $\Ocomp(\tps/d)$ nodes
	according to their order in preorder ordering.
	On top of that we store binary vector $B$ which satisfies
	$B[i]=1$ if and only if $i$-th node of $\mathcal{T'}$
	in preorder ordering has its context stored.
	We build rank/select structure for $B$,
	as we have \procfont{preorder-rank} and \procfont{preorder-select}
	operation for $\mathcal{T'}$ in constant time, for a given node we
	can check in constant time if the node have its index stored or not.
	As each context has the same bit-length to decode context for node
	which is $j$-th in preorder ordering we look at position $(j-1)\lceil k \log \sigma\rceil$.
	
	The total space for storing the context is $\Ocomp(\tps k \log \sigma / d)
	= \Ocomp(\tps \log \log \ts)$,
	summing that up with space bound from Lemma~\ref{lem:huffman_encode_boosting}
	yields the claim.
\end{proof}

\section{Additional material for Section~\ref{sec:label_operations}}
\begin{proof}[Proof of Theorem~\ref{thm:label_operations}]
The first part of the theorem is easy: if $\sigma$ is constant we can
construct a separate structure for each letter.
For each letter $a$ we build a separate degree sequence, level\_ancestor structure
and depth structure; observe that all of those structures support
the weighted case when we assign each vertex weight of $0$ or $1$,
so it is sufficient to assign nodes labeled with $a$ value $1$
and for the rest value $0$. 
Similar idea was mentioned in~\cite{tsur2015succinct,he2014framework,geary2006succinct}.

For the next two parts we show how to adapt rank/select structures over large alphabets
to support childrank/childselect queries.

For the second part, when $\sigma = \Ocomp(\log^{1+o(1)} \ts )$ we
use result by Belazzougui et al.~\cite{OptimalLowerUpper}.
They show (at discussion at above Theorem 5.7 in \cite{OptimalLowerUpper})
how for the sequence $S$ divided into $\Ocomp(|S|/m)$ blocks of length at most $m$, for some $m$,
construct rank/select structure for large alphabets, assuming that we can
answer queries in time $\Ocomp(t)$ in blocks,
such that it takes $\Ocomp(t)$ time for query and consumes additional
$|S| \log \frac{\sigma}{m} + \Ocomp\left(|S| + \frac{(\sigma |S| / s)
\log \log (\sigma |S| / s)}{\log \sigma |S| /s}\right)$ bits
(in~\cite{OptimalLowerUpper} the assumption is that we can answer queries in blocks in $\Ocomp(1)$
time but as the operations on additional structure cost constant time,
our claim also holds).
The solution uses only succinct bitmaps by Raman et al.~\cite{raman2003succinct} and precomputed tables
for additional data.
Now we can define sequence $S$ as concatenation of labels of roots of cluster,
where clusters are ordered by preorder ordering,
this gives our blocked sequence (where blocks correspond to clusters).
Observe that labeled childrank/childselect operations can easily be reduced
to labeled rank/select in string $S$, all we need to do is to know where
the sequence for children of a given vertex $v$ begins in $S$.
Fortunately, this can be done in same manner as in Lemma~\ref{lem:label_seq},
that is, we use structure for degree sequence.
As in our case $m = \log_\sigma \ts$, we can store precomputed tables to answer rank/select
queries for each cluster.
The additional space is $o(|S| \log \sigma)$ and clearly $|S| \leq \ts$.

For the last part we use Lemma 3 from~\cite{BarbayRankSelect}.
The lemma states that for a string $|S|$ if, for a given $i$,
we can access $i$-th element in time $\Ocomp(t)$
then we can, using additional $o(|S|\log\sigma)$ bits,
support labeled rank/select operations
in time $\Ocomp(t \log \log^{1+\epsilon} \sigma)$.
We use the same reduction as in the case for $\sigma = \Ocomp(\log^{1+o(1)} |S|)$,
i.e.\ we set $S$ as concatenation of labels of roots of clusters, where clusters
are ordered by preorder ordering.
As in previous case, we use node degree sequence
and tree structure to~retrieve $i$-th character in $|S|$
(i.e.\ we first find appropriate cluster and use precomputed tables).
\end{proof}

\bibliographystyle{ACM-Reference-Format}
\bibliography{references}

\end{document}